\DeclareMathOperator*{\defeq}{\triangleq}
\newtheorem{theorem}{Theorem}
\newtheorem{corollary}{Corollary}
\newtheorem{lemma}{Lemma}
\newtheorem{example}{Example}
\newcommand{\bit}{\begin{itemize}}
\newcommand{\eit}{\end{itemize}}
\newcommand{\bc}{\begin{center}}
\newcommand{\ec}{\end{center}}
\newcommand{\ba}{\begin{array}}
\newcommand{\ea}{\end{array}}
\newcommand{\beq}{\begin{equation}}
\newcommand{\eeq}{\end{equation}}
\newcommand{\beqn}{\begin{equation*}}
\newcommand{\eeqn}{\end{equation*}}
\newcommand{\bean}{\begin{eqnarray*}}
\newcommand{\eean}{\end{eqnarray*}}
\newcommand{\bea}{\begin{eqnarray}}
\newcommand{\eea}{\end{eqnarray}}
\def\E{\mathbb{E}}
\def\F{\mathbb{F}}
\def\av{\boldsymbol{a}}
\def\bv{\boldsymbol{b}}
\def\sv{\boldsymbol{s}}
\def\xv{\boldsymbol{x}}
\def\yv{\boldsymbol{y}}
\newcommand{\Cc}{{\mathcal C}}
\newcommand{\Nc}{{\mathcal N}}
\newcommand{\T}{{\scriptscriptstyle\mathsf{T}}}
\newtheorem{remark}{Remark}
\renewcommand{\Bmatrix}[1]{\begin{bmatrix}#1\end{bmatrix}}
\newcommand{\non}{\nonumber}
\newcommand{\Hen}{\mathbb{H}}
\newcommand{\hen}{\mathrm{h}}
\newcommand{\Imu}{\mathbb{I}}
\newcommand{\bln}{n}
\newcommand{\mc}{m_c}
\newcommand{\md}{m_d}
\newcommand{\J}{\mathrm{J}}
\begin{document}
\sloppy

\title{ New Results on the Secure Capacity of Symmetric Two-User Interference Channels}
\title{ On the Optimality of Secure Communication Without Using Cooperative Jamming}
\author{Jinyuan Chen 
\thanks{Jinyuan Chen is with Louisiana Tech University, Department of Electrical Engineering, Ruston, USA (email: jinyuan@latech.edu). This work was presented in part at 54th Annual Allerton Conference on Communication, Control, and Computing, 2016.} 
}

\maketitle
\pagestyle{headings}

\begin{abstract}

We consider secure communication over a two-user Gaussian interference channel, where each transmitter sends a \emph{confidential} message  to its legitimate receiver.  
For this setting,  we identify a regime where  the simple scheme of using Gaussian wiretap codebook at each transmitter (without cooperative jamming) and treating interference as noise at each intended receiver (in short, GWC-TIN scheme) achieves the optimal secure sum capacity to within a constant gap. 
The results are proved by first considering the deterministic interference channel model and identifying a regime in which  a simple scheme without using cooperative jamming is optimal in terms of secure sum capacity.
For the symmetric case of the deterministic model, this simple scheme is optimal \emph{if and only if} the interference-to-signal ratio (in channel strengths) is no more than $2/3$. 

\end{abstract}

\section{Introduction}

The notion of information-theoretic secrecy was first introduced by Shannon  in his seminal work \cite{Shannon:49}, which studied a secure communication in the presence of a private key that is revealed to both transmitter and legitimate receiver  but not to the eavesdropper. 
Later,  Wyner introduced the notion of secure capacity via  a degraded  wiretap channel, in which a  transmitter intends to send a confidential message to a legitimate receiver by hiding it from a degraded eavesdropper \cite{Wyner:75}. 
The secure capacity is the maximum rate at which the confidential message can be transmitted reliably and securely to the legitimate receiver. 
Wyner's result  was subsequently generalized to the non-degraded wiretap channel by Csisz\`ar and  K{\"o}rner \cite{CsiszarKorner:78}, and the Gaussian wiretap channel by Leung-Yan-Cheong and Hellman \cite{CH:78}.
This line of  secure capacity research has been extended to many multiuser channels, most notably, the broadcast  channels \cite{LMSY:08, LLPS:10, XCC:09,ChiaGamal:12,KTW:08}, multiple access channels  \cite{TY:08cj, TekinYener:08d, LP:08, LLP:11, KG:15, HKY:13},  and the interference channels \cite{LMSY:08,LBPSV:09,LYT:08,HY:09,YTL:08, KGLP:11, XU:14, XU:15, MDHS:14, MM:14o, GTJ:15,MXU:17}.

In the line of secure capacity research, cooperative jamming has been proposed extensively to improve the achievable secure rates in many channels (see \cite{TY:08cj,LMSY:08, XU:14, XU:15} and references therein). 
In particular,  cooperative jamming has been proposed in \cite{XU:14} and \cite{XU:15} to achieve the optimal secure sum degrees-of-freedom (DoF) in the interference channel with confidential messages,  wiretap channel with helpers, multiple access wiretap channel, and the broadcast channel with confidential messages. 
The basic idea of the cooperative jamming scheme is to send jamming signals to confuse the potential eavesdroppers, while keeping legitimate receivers' abilities to decode the desired messages. This might involve a cooperation between the transmitters, and a careful design on the \emph{direction} and/or \emph{power}  of the cooperative jamming signals (see \cite{TY:08cj, LMSY:08, XU:14, XU:15}).  It is therefore implicit that the cooperative jamming schemes might  incur  some extra overhead, e.g., due to network coordination,  channel state information (CSI) acquisition, and power consumption.

In this work we study the secure communication schemes without cooperative jamming. In particular, for a two-user Gaussian interference channel with confidential messages, we identify a regime in which  the simple scheme of using Gaussian wiretap codebook at each transmitter, without cooperative jamming, and treating interference as noise at each intended receiver (in short, GWC-TIN scheme) achieves the optimal secure sum capacity to within a constant gap. 
The secrecy offered by this GWC-TIN scheme is information-theoretic secrecy, which holds for any decoding method at any unintended receiver (eavesdropper). 
In this simple scheme, the transmitters do \emph{not} need to know the information of  the \emph{channel phases}. Therefore, the overhead associated with acquiring  channel state information at the transmitters (CSIT) is minimal for the  GWC-TIN scheme.

The results are proved by first considering the deterministic interference channel model (see~\cite{ADT:11}) and identifying a regime in which  a simple scheme without using cooperative jamming is optimal in terms of secure sum capacity.
In this simple scheme, the data is simply transmitted over the least significant signal bits such that no interference is leaked to the unintended receiver.  In this way, the deterministic interference channel is decomposed  into  two parallel channels ---  in each  channel the transmitter sends confidential data to its legitimate receiver without the cooperation from the other transmitter.
For the symmetric case of the deterministic model, this simple scheme is optimal \emph{if and only if} the interference-to-signal ratio (in channel strengths) is no more than $2/3$. 

To prove the optimality of the aforementioned schemes, we derive a new secure capacity bound for each of the two interference channel models.   In our proof the approach is different from the genie-aided approach that is commonly used in the settings without secrecy constraints (see~\cite{ETW:08}). In the genie-aided approach,  some genie-aided information is typically provided to the receivers, which might give a loose bound in our setting.

The remainder of this work is organized as follows.  
Section~\ref{sec:system} describes the  system model, as well as the simple scheme without cooperative jamming, for each of  the Gaussian and deterministic interference channels.  
Section~\ref{sec:mainresult} provides the  main results of this work. 
The  proofs are provided in   Section~\ref{sec:converseDet}, Section~\ref{sec:converse} and the appendices.
The work is concluded in Section~\ref{sec:conclusion}.
Throughout this work, $\Imu(\bullet)$, $\Hen(\bullet)$ and $\hen(\bullet)$ denote the mutual information, entropy and differential entropy,  respectively.  
$(\bullet)^\T$ denotes the transpose operation.  
$\F^{q}_{2}$ denotes a set of $q$-tuples of binary numbers.  
$(\bullet)^+= \max\{0, \bullet\}$. Logarithms are in base~$2$.  
Unless for some specific parameters,  matrix,  scalar, and  vector are usually denoted by the  italic uppercase symbol (e.g., $S$),  italic lowercase symbol (e.g., $s$), and  the bold italic lowercase symbol  (e.g., $\sv$), respectively.
$s \sim \mathcal{CN}(0, \sigma^2)$  denotes that the random variable $s$ has a circularly symmetric complex normal distribution with zero mean and $\sigma^2$ variance.

\section{System models and preliminaries  \label{sec:system} }

This section provides the system models for Gaussian interference channel and deterministic interference channel, respectively.
For each model, a simple scheme without using cooperative jamming is also discussed in this section.

\subsection{Gaussian interference channel  \label{sec:systemGaussian} }

We begin with a two-user  Gaussian interference channel. The channel output at receiver~$k$ at time~$t$ is  
\begin{align}
y_{k}(t) &= \sum_{\ell=1}^{2}  \sqrt{P^{\alpha_{k\ell}}} e^{j\theta_{k\ell}} x_{\ell}(t) +z_{k}(t), \quad k=1,2,  \label{eq:channelG} 
\end{align}
$t=1,2, \cdots, \bln $, where  $x_{\ell}(t)$ is the channel input at transmitter~$\ell$  subject to a normalized power constraint  $\E |x_{\ell}(t)|^2 \leq 1$, $z_k(t) \sim \mathcal{C}\mathcal{N}(0, 1)$ is additive white Gaussian noise at receiver~$k$, $\sqrt{P^{\alpha_{k\ell}}}$ and $\theta_{k\ell}$  represent the magnitude and phase of the channel between transmitter~$\ell$ and receiver~$k$, where $P\geq 1$ is a nominal power value\footnote{In this work we assume that the channel phases, as well as the channel strengths, are fixed over the whole communication period. However, our results can be extended easily to the settings where the channel phases are time-varying.}. 
The  exponent  $\alpha_{k\ell} \geq 0$ represents the channel strength of the link between transmitter~$\ell$ and receiver~$k$.   
We assume that  each transmitter knows the channel strengths  $\{\alpha_{k\ell} \}_{k, \ell}$ but not necessarily the phases $\{\theta_{k\ell} \}_{k, \ell}$, while each receiver knows all the channel strengths and phases.

For this interference channel, each transmitter wishes to send a confidential message to its legitimate receiver. Specifically transmitter~$k$ wishes to send to  receiver~$k$ a message $w_k$ that is  uniformly chosen from a set $\mathcal{W}_k \defeq \{1,2,\cdots, 2^{\bln R_k}\}$, where $R_k$ is  the rate  (bits/channel use) of this message and $\bln$ is the total number of channel uses, $k=1,2$. 
At transmitter~$k$, a stochastic function \[f_k: \mathcal{W}_k  \to   \mathcal{X}_k^{\bln} , \quad  k=1,2 \] is employed to encode the message. 
A secure rate pair $(R_1, R_2)$ is said to be achievable  if for any $\epsilon >0$ there exists a sequence of $\bln$-length codes such that each receiver can decode its own message reliably, i.e., the probability of decoding error is less than $\epsilon$, 
 \begin{align}
 \text{Pr}[w_k  \neq \hat{w}_k  ]  \leq \epsilon, \quad \forall k   \label{eq:Pedef}
  \end{align}
and the messages are kept secret such that 
 \begin{align}
\frac{1}{\bln}\Hen(w_1 | y_{2}^{\bln})  &\geq   \frac{1}{\bln}\Hen(w_1)  - \epsilon  \label{eq:defsecrecy1}  \\
\frac{1}{\bln}\Hen(w_2 | y_{1}^{\bln})  &\geq   \frac{1}{\bln}\Hen(w_2)  - \epsilon,   \label{eq:defsecrecy2}
 \end{align}
where $y_{k}^{\bln}$ represents the $\bln$-length channel output  of receiver~$k$, $k=1,2$. 
The secure capacity region $C$ is the closure of the set of all achievable secure rate pairs.
The secure sum capacity is defined as: 
 \begin{align}
 C_{\text{sum}} \defeq \sup \big\{ R_1 + R_2 |  \  (R_1, R_2) \in C  \big\} .  \label{eq:defGDoFsum}
  \end{align}
The secure sum generalized degrees-of-freedom (GDoF) is defined as 
 \begin{align}
 d_{\text{sum}}  \defeq   \lim_{P \to \infty}   \frac{C_{\text{sum}}}{\log P}.  \label{eq:defGDoF}
 \end{align}

\subsection{Deterministic interference channel  \label{sec:systemDeterministic} }

One way to better understand the capacity of the Gaussian channels is to study their linear deterministic models (see \cite{ADT:11}). 
In this work we also consider a two-user  deterministic interference channel.  
By following the common convention (see~\cite{ADT:11, GTJ:15}),   we assume that  the  input-output relation of the deterministic channel is given by 
\begin{align}  
\yv_1(t) &=   S^{q- m_{11}} \xv_1(t)  \oplus   S^{q- m_{12}} \xv_2(t)         \label{eq:detTHIC1} \\
\yv_2(t) &=   S^{q- m_{21}} \xv_1(t)  \oplus   S^{q- m_{22}} \xv_2(t),           \label{eq:detTHIC2} 
\end{align}  
where $\xv_k(t) = \Bmatrix{ x_{k,1}(t) , x_{k,2}(t) , \cdots, x_{k,q}(t) }^\T \in
\F^{q}_{2}$ denotes the channel input of transmitter $k$ at time $t$;  $\yv_k(t)\in
\F^{q}_{2}$ denotes the channel output of receiver~$k$ at time $t$, $k=1,2$, $q \defeq \max\{m_{11}, m_{12}, m_{21}, m_{22}\}$;  $S$ is a $q\times q$ lower
shift matrix, and  $S^{q- m_{21}} \xv_1(t)= \Bmatrix{ 0,
  \cdots,0, x_{1,1}(t), \cdots, x_{1,m_{21}}(t) }^\T$. $\oplus$ denotes modulo~2 addition.  The nonnegative integers $m_{k k}$ and $m_{\ell k}$ denote the number of information bits that can be communicated per channel use over the direct and
cross links, respectively, for $\ell, k \in \{1,2\}, \ell \neq k$.  
For the \emph{symmetric} case of the deterministic channel model, we  let 
\begin{align}  
\md  =  m_{11}  =  m_{22} ,  \quad  \mc  =  m_{12}   =  m_{21}          \label{eq:detsym} 
\end{align} 
 and let  
\begin{align}  
\alpha \defeq \frac{\mc}{\md}         \label{eq:detsyma} 
\end{align} 
that is a normalized interference parameter.
 
 Similarly to the Gaussian case,  transmitter~$k$ wishes to send to its receiver~$k$ a message $w_k$ that is  uniformly chosen from a set $\mathcal{W}_k=\{1,2,\cdots, 2^{\bln R_k}\}$,  $k=1,2$. 
Transmitter~$k$ uses a stochastic function $g_k: \mathcal{W}_k \to  \F^{q \times \bln}_{2}$ to encode the message. 
A secure rate pair $(R_1, R_2)$ is said to be achievable  if for any $\epsilon >0$ there exists a sequence of $\bln$-length codes such that each receiver can decode its own message reliably (cf.~\eqref{eq:Pedef})
and the messages are kept secret, i.e.,  
$  \Imu(w_1; \yv_{2}^{\bln})  \leq  \bln \epsilon$ and $\Imu(w_2; \yv_{1}^{\bln})  \leq  \bln \epsilon$. 
The secure  capacity region $C$ and sum capacity $C_{\text{sum}}$ are defined  similarly as in the Gaussian case (cf.~\eqref{eq:defGDoFsum}).

\subsection{The scheme without using cooperative jamming  for the deterministic channel model  \label{sec:noCJdet} }

Let us discuss a simple scheme without using cooperative jamming (in short, WoCJ scheme) for the deterministic channel defined  in Section~\ref{sec:systemDeterministic}.  
In this scheme,  transmitter~$k$  simply sends a total of  $(m_{kk} - m_{\ell k})^+$ bits of private data over  the  least significant signal bits \emph{without using cooperative jamming}, for $k,\ell   \in \{1,2\}, k \neq \ell$ and $(\bullet)^+= \max\{0, \bullet\}$.  The transmission of the private data is secure from the unintended receiver (eavesdropper) because the private data is not seen by the unintended receiver. 
In this simple way the scheme achieves the following secure rate pair: 
\begin{align}
R_1  &  =  (m_{11 } - m_{21})^+       \label{eq:NoCJdetrate1} \\
R_2 &=    (m_{22 } - m_{12})^+.      \label{eq:NoCJdetrate2}
 \end{align}
Fig.~\ref{fig:det2131}  depicts the scheme  for a specific setting with $m_{11}=m_{22}= 3, m_{21}=2$, and $m_{12}=1$. 
Since the data is transmitted over the least significant signal bits, it implies that: 1)  no confidential information is leaked to the unintended receiver; 2)   no interference is leaked to the unintended receiver\footnote{When the  signal  is not intended to the receiver, it usually interferes  the desired signal and is typically called as interference. However, in some communication scenarios with  secrecy constraints, the interference signal could be utilized as a jamming signal to improve the secure rate of the system. For notational convenience we will still use the name of ``interference'' to denote the unintended signal.}.  In this way, the interference channel is decomposed  into  two parallel wiretap channels --- in each  wiretap channel the transmitter sends confidential data to its legitimate receiver without the cooperation from the other transmitter (see Fig.~\ref{fig:det2131}).

\begin{figure}[t!]
\centering
\includegraphics[width=8cm]{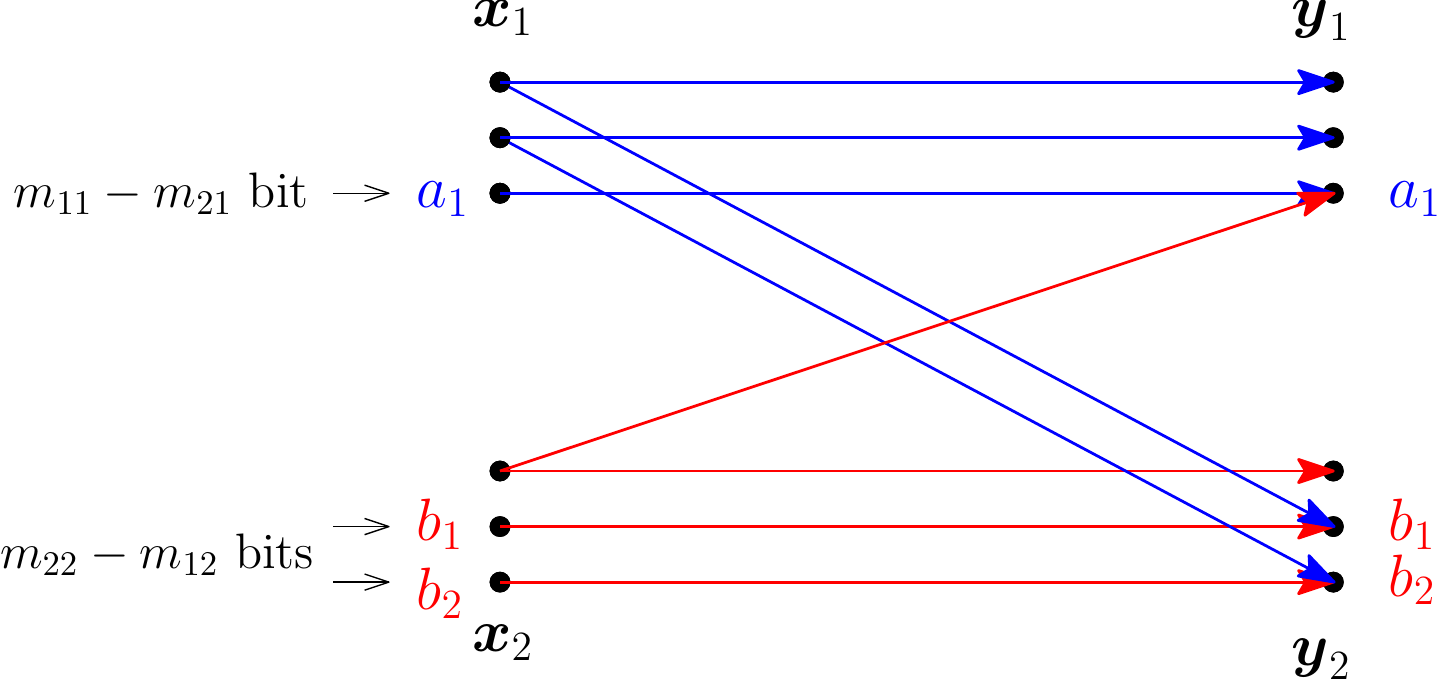}
\caption{The  scheme without using cooperative jamming for the deterministic channel: the case  with  $m_{11}=m_{22}= 3, m_{21}=2$, and $m_{12}=1$. In this case,  transmitter~1 simply sends its private data  over   $(m_{11 } - m_{21})^+=1$ least significant  bit of the signal, while  transmitter~2 sends its private data  over   $(m_{22 } - m_{12})^+=2$ least significant  bits of the signal, without using cooperative jamming. The transmission of private data is secure from the unintended receiver.}
 \label{fig:det2131}
\end{figure}

\subsection{The scheme without using cooperative jamming for the Gaussian channel model  \label{sec:noCJGau} }

The scheme discussed in Section~\ref{sec:noCJdet}  can be extended to the Gaussian channel in a  similar way.
For this Gaussian interference channel, each  interference signal leaked to the unintended receiver is scaled down to the noise level by applying a proper power allocation strategy.
Due to the noisy interference, the Gaussian interference channel is approximately decomposed  into  two parallel wiretap Gaussian channels. Therefore, in this scheme each transmitter simply employs a \emph{Gaussian wiretap codebook} (GWC) to guarantee the secrecy \emph{without using cooperative jamming}, while each receiver simply \emph{treats interference as noise} (TIN) when decoding its desired message. We call it as a  GWC-TIN scheme.  Note that the secrecy offered by this GWC-TIN scheme is information-theoretic secrecy, which holds for any decoding method at any eavesdropper. 
Some details of the scheme are discussed as follows.

\subsubsection{Gaussian wiretap codebook}
To build the  codebook, transmitter~$k$  generates a total of $2^{\bln (R_k + R_k')}$  independent  codewords  $v^{\bln}_k$  with each element independent and identically distributed (i.i.d.) according to a circularly-symmetric complex normal distribution with variance $P^{- \beta_k}$, $k=1,2$, for some $R_k, R_k'$ and $\beta_k \geq 0$ that will be designed specifically later on. 
The codebook $\mathcal{B}_{k}$ is defined as a set of the labeled codewords:
   \begin{align}
     \mathcal{B}_{k} \defeq \bigl\{ v^{\bln}_k (w_k,  w_k'): \  w_k \in \{1,2,\cdots, 2^{\bln R_k}\}, \  w_k' \in \{1,2,\cdots, 2^{\bln R_k'}\}   \bigr\},  \quad k=1,2.      \label{eq:code2341}
     \end{align}
 To transmit the message $w_k$,  transmitter~$k$ at first selects a bin (sub-codebook)  $\mathcal{B}_{k}( w_k) $ that is defined as 
\[   \mathcal{B}_{k} (w_k)  \defeq \bigl\{ v^{\bln}_k (w_k,  w_k'): \  w_k' \in \{1,2,\cdots, 2^{\bln R_k'}\}   \bigr\},  \quad k=1,2,  \]
and then \emph{randomly} chooses a codeword $v^{\bln}_k$ from the selected bin according to a uniform distribution.
Since this scheme will not use cooperative jamming, the chosen codeword $v^{\bln}_k$ will be mapped exactly as a channel input sequence by transmitter~$k$, that is,  $x_k (t) =  v_k (t), \ t=1,2, \cdots, \bln$,
where $v_k (t)$ is the   $t$th element of  the codeword $v^{\bln}_k$,  $k=1,2$. Based on this one-to-one mapping and Gaussian codebook, it implies that 
\begin{align}
x_k (t)= v_k (t) \sim \Cc\Nc (0, P^{- \beta_k}), \quad \forall t,  \quad   k=1,2.  \label{eq:map888}
\end{align}
Then,  the received signals take the following forms (removing the time index for simplicity):
\begin{align}
y_{1} &=  \underbrace{\sqrt{P^{\alpha_{11}}} e^{j\theta_{11}} v_1}_{P^{\alpha_{11}  - \beta_1}}+  \underbrace{\sqrt{P^{\alpha_{12}}} e^{j\theta_{12}}v_2}_{P^{\alpha_{12} - \beta_2}}  +  \underbrace{z_{1}}_{P^{0}}    \label{eq:y441}  \\
y_{2} &= \underbrace{\sqrt{P^{\alpha_{22}}} e^{j\theta_{22}} v_{2} }_{P^{\alpha_{22}-\beta_2}} + \underbrace{ \sqrt{P^{\alpha_{21}}} e^{j\theta_{21}} v_{1}}_{P^{\alpha_{21}-\beta_1}}    + \underbrace{z_{2}}_{P^{0}} \label{eq:y442}
\end{align}
(cf.~\eqref{eq:channelG}). In the above equations, the  average power is noted under each  summand term.

\subsubsection{Treating interference as noise}
In terms of decoding, each intended receiver simply treats interference as noise. This implies that receiver~$k$ can  decode the codeword $v^{\bln}_k (w_k,  w_k')$  with arbitrarily small error probability when  $\bln$ gets large and  the rate of the codeword (i.e., $R_k + R_k'$) satisfies the following condition: 
\begin{align}
R_k + R_k' <  \Imu(v_k; y_k ) ,  \quad k=1,2      \label{eq:Rk876}
\end{align}
(cf.~\cite{CT:06}). 
Note that $R_k$ and $R_k'$ represent the rates of the secure message $w_k$ and the confusion message $w_k'$, respectively (cf.~\eqref{eq:code2341}).
Once  the codeword  $v^{\bln}_k (w_k,  w_k')$ is decoded, the message $w_k$ can be decoded directly from the codebook mapping.
Let us set
\begin{align}
R_k &\defeq    \Imu(v_k; y_k) -  \Imu ( v_k; y_{\ell} | v_{\ell} ) - \epsilon  \label{eq:Rk623} \\ 
R_k'  &\defeq   \Imu ( v_k; y_{\ell} | v_{\ell}) - \epsilon  \label{eq:Rk623b}  
\end{align}
for some $\epsilon >0$ and $k,\ell   \in \{1,2\}, k \neq \ell$. Obviously,  $R_k$ and $R_k'$ designed in \eqref{eq:Rk623} and \eqref{eq:Rk623b} satisfy the condition in \eqref{eq:Rk876}.

\subsubsection{Secure rate}
From the  proof of \cite[Theorem~2]{XU:15}   (or \cite[Theorem~2]{LMSY:08})  it implies that, given the above wiretap codebook and the rates designed in \eqref{eq:Rk623} and \eqref{eq:Rk623b},  the messages $w_1$ and $w_2$ are secure from their eavesdroppers, that is,  
$\Imu(w_1; y_{2}^{\bln})  \leq  \bln \epsilon$ and $\Imu(w_2; y_{1}^{\bln})  \leq  \bln \epsilon$. 
Therefore, by letting $\epsilon \to 0$, the scheme 
achieves the secure rate pair  $R_1  =  \Imu(v_1; y_1) -  \Imu ( v_1; y_2 | v_2 ) $ and $R_2  = \Imu(v_2; y_2) -  \Imu ( v_2; y_1 | v_1 )$.   Due to the Gaussian inputs  and  outputs  (see \eqref{eq:map888}, \eqref{eq:y441} and \eqref{eq:y442}), this achievable secure rate pair is expressed as 
 \begin{align}
R_1    &=  \underbrace{ \log \Bigl(  1+    \frac{P^{\alpha_{11} - \beta_1}}{ 1+ P^{\alpha_{1 2} - \beta_{2}}}  \Bigr)}_{= \Imu(v_1; y_1)} - \underbrace{ \log (1+ P^{\alpha_{21} - \beta_{1}})}_{  = \Imu ( v_1; y_2 | v_2 ) }   \non   \\
R_2   & =  \underbrace{\log \Bigl(  1+     \frac{P^{\alpha_{22} - \beta_2}}{ 1+ P^{\alpha_{21} - \beta_{1}}}\Bigr) }_{ = \Imu(v_2; y_2)  } -  \underbrace{ \log (1+ P^{\alpha_{12} - \beta_{2}}) }_{ = \Imu ( v_2; y_1 | v_1 ) }  \non
\end{align}
for some $\beta_1, \beta_2 \geq 0$.  By setting $\beta_1 =  \alpha_{21}$ and  $\beta_2 =  \alpha_{12}$, then the interference at each receiver is scaled down to the noise level (see~\eqref{eq:y441} and \eqref{eq:y442}) and the  achievable secure rate pair becomes
 \begin{align}
R_1    &=   \log \bigl(  1+    \frac{P^{\alpha_{11} - \alpha_{21} }}{2}  \bigr)  - 1    \label{eq:Rk111}    \\
R_2   & =  \log \bigl(  1+     \frac{P^{\alpha_{22} -  \alpha_{12} }}{ 2 }\bigr)  -  1.   \label{eq:Rk432}  
\end{align}
Note that in this GWC-TIN scheme, the transmitters do \emph{not} need to know  the channel  phases.

\section{Main results  \label{sec:mainresult}}

This section provides the main results for the  deterministic channel  model and Gaussian channel model, respectively.

\subsection{Results for the deterministic channel  \label{sec:mainresultDet}}

For the  deterministic  channel defined in Section~\ref{sec:systemDeterministic}, we identify a regime where a simple scheme without using cooperative jamming (i.e., WoCJ scheme, described in  Section~\ref{sec:noCJdet})  achieves the optimal secure sum capacity.  The result is stated in the following theorem.

 \vspace{5pt}
 
\begin{theorem}  \label{thm:capacitydetasym}
For the two-user  deterministic interference channel defined in Section~\ref{sec:systemDeterministic}, where $m_{k\ell} $ denotes the level of bits of the channel from transmitter~$\ell$ to receiver~$k$, $\forall k, \ell \in \{1,2\}$, if the following  conditions are satisfied,  
\begin{align}
    m_{22} + (m_{11 } - m_{12})^+       &\geq   m_{21}+  m_{12}     \label{eq:capdetcond1} \\
    m_{11} + (m_{22 } - m_{21})^+      & \geq   m_{21}+  m_{12}     \label{eq:capdetcond2}
\end{align}
then a simple scheme without using cooperative jamming (described in Section~\ref{sec:noCJdet}) achieves the optimal secure sum capacity, which  is 
\begin{align}
 C_{\text{sum}}  &  =  (m_{22 } - m_{12})^+     +    (m_{11 } - m_{21})^+ .     \label{eq:capdetasy2}
 \end{align}
\end{theorem}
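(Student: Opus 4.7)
Since achievability is established by the WoCJ scheme of Section~\ref{sec:noCJdet}, the work lies entirely in the converse: showing $R_1+R_2 \le (m_{11}-m_{21})^+ + (m_{22}-m_{12})^+$ whenever~\eqref{eq:capdetcond1}--\eqref{eq:capdetcond2} hold. My starting point is the standard pairing of Fano's inequality at receiver~$k$ (giving $\Hen(w_k\mid\yv_k^n)\le n\epsilon_n$) with the secrecy constraint at the other receiver (giving $\Hen(w_k\mid\yv_\ell^n)\ge nR_k-n\epsilon$). Chaining these produces
\[
nR_k \;\le\; \Imu(w_k;\yv_k^n\mid\yv_\ell^n) + o(n), \qquad k\in\{1,2\},\ \ell\ne k,
\]
so that $n(R_1+R_2) \le \Imu(w_1;\yv_1^n\mid\yv_2^n)+\Imu(w_2;\yv_2^n\mid\yv_1^n)+o(n)$.

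The next step is to tighten each conditional mutual information using the deterministic structure. A natural attempt is to provide the cross transmitter's signal as side information at receiver~$k$: since $(\yv_\ell^n,\xv_\ell^n)$ determines $S^{q-m_{\ell k}}\xv_k^n$, the residual uncertainty in $\yv_k^n$ is at most $n(m_{kk}-m_{\ell k})^+$ (the bits of $\xv_k^n$ invisible at the cross link). The subtle point is accounting for the cost of supplying $\xv_\ell^n$, which by the chain rule introduces an extra term of the form $\Imu(w_k;\xv_\ell^n\mid\yv_\ell^n)$. This term is not controlled by secrecy alone, since secrecy only bounds $\Imu(w_k;\yv_\ell^n)$ and not $\Imu(w_k;\xv_\ell^n)$. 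I would handle it by combining the secrecy inequality with Fano at receiver~$\ell$, so that the genie penalty becomes proportional to the effective confusion entropy $\Hen(\xv_\ell^n\mid w_\ell)/n$ of TX$\ell$'s codebook, itself constrained by decodability at RX$\ell$.

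The main obstacle is closing this accounting so the final sum-rate bound is exactly $(m_{11}-m_{21})^+ + (m_{22}-m_{12})^+$ and not some strictly larger quantity. A loose bookkeeping leaves residual confusion-rate terms which, in the symmetric case $m_{11}=m_{22}=\md$, $m_{12}=m_{21}=\mc$, would yield only $R_1+R_2\le 2\md-\mc$ rather than the desired $2(\md-\mc)$, so the extra slack must be absorbed. The hypotheses~\eqref{eq:capdetcond1}--\eqref{eq:capdetcond2} are exactly the inequalities that make this possible: they express that the decodability budget at each receiver is too tight to accommodate a confusion codebook large enough to produce useful cooperative jamming at the opposite receiver, so the cross-terms in the accounting must vanish. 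Since the paper explicitly warns that the usual genie-aided route is too loose in this secrecy setting, I expect the converse to proceed not through a classical genie argument but through a direct chain-rule identity that simultaneously invokes Fano at both receivers and secrecy in both directions, with~\eqref{eq:capdetcond1} and~\eqref{eq:capdetcond2} applied algebraically to cancel the residual cross-terms; pinning down the precise algebraic rearrangement is the heart of the argument.
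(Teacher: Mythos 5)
Your opening reduction is fine: combining Fano at each receiver with the secrecy constraint at the other receiver is exactly how the paper starts (it uses the slightly tighter difference form $\Imu(w_k;\yv_k^{\bln})-\Imu(w_k;\yv_\ell^{\bln})$ rather than your conditional form $\Imu(w_k;\yv_k^{\bln}\mid\yv_\ell^{\bln})$, and the difference form matters later because it is what telescopes into $\Hen(\yv_2^{\bln}\mid w_1)-\Hen(\yv_1^{\bln}\mid w_1)+\Hen(\yv_1^{\bln}\mid w_2)-\Hen(\yv_2^{\bln}\mid w_2)$). From that point on, however, your proposal is a plan rather than a proof, and you concede as much: the ``precise algebraic rearrangement'' you defer to is precisely the content of the converse. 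Your concrete suggestion --- condition on the full cross signal $\xv_\ell^{\bln}$ and pay for it with a term $\Imu(w_k;\xv_\ell^{\bln}\mid\yv_\ell^{\bln})$ to be controlled by Fano at receiver~$\ell$ through the confusion entropy $\Hen(\xv_\ell^{\bln}\mid w_\ell)$ --- has no demonstrated mechanism and is unlikely to close: decodability at receiver~$\ell$ does not cap that confusion entropy in the way you need, since a transmitter can inject high-entropy jamming bits on levels that its own receiver can strip or ignore. You are also attributing the wrong role to hypotheses \eqref{eq:capdetcond1}--\eqref{eq:capdetcond2}: in the paper they do not cancel any cross terms inside the entropy accounting.

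What the paper actually does, and what is missing from your proposal, is an unconditional outer bound (Lemma~\ref{lm:detconverse}, bound \eqref{eq:detup3}) proved without ever conditioning on the full $\xv_\ell^{\bln}$. Instead, it inserts two \emph{partial} signals into the difference $\Hen(\yv_2^{\bln}\mid w_1)-\Hen(\yv_1^{\bln}\mid w_1)$: $\sv_{11}$, the top $(m_{11}-m_{12})^+$ bits of $\xv_1$, and $\sv_{12}$, the top $m_{12}$ bits of $\xv_2$. The chain-rule expansion produces four residual terms $\J_1,\dots,\J_4$; the key identity $\Hen(\sv_{12}^{\bln}\mid\sv_{11}^{\bln},w_1)=\Hen(\yv_1^{\bln}\mid\xv_1^{\bln},\sv_{11}^{\bln},w_1)$ (independence of the two transmitters plus the deterministic channel law) converts the expansion into $-\Imu(\yv_1^{\bln};\xv_1^{\bln}\mid\sv_{11}^{\bln},w_1)-\J_1+\J_2-\J_3+\J_4$, and then only two single-letter estimates are needed: $\J_2=\Hen(\sv_{11}^{\bln}\mid\yv_1^{\bln},w_1)\le 0$ because $\sv_{11}$ is reconstructible from $\yv_1$, and $\J_4=\Hen(\yv_2^{\bln}\mid\sv_{12}^{\bln},\sv_{11}^{\bln},w_1)\le \bln\cdot\max\{m_{21}-(m_{11}-m_{12})^+,\,m_{22}-m_{12},\,0\}$ because $\sv_{11},\sv_{12}$ reconstruct all but that many least significant bits of $\yv_2$ (Lemma~\ref{lm:Jboundsdet}). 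This yields \eqref{eq:detup3} for \emph{all} parameters; the hypotheses \eqref{eq:capdetcond1}--\eqref{eq:capdetcond2} are invoked only afterwards, purely algebraically, to collapse the two max expressions to $(m_{22}-m_{12})^+$ and $(m_{11}-m_{21})^+$ so that the bound meets the WoCJ rate. Your instinct that the classical genie route is too loose is correct, but the specific decomposition via $\sv_{11},\sv_{12}$ and the bounds on $\J_2,\J_4$ --- the actual engine of the converse --- are absent from your proposal, so as written it does not establish the theorem.
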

 \vspace{3pt}
\begin{proof}
As discussed in Section~\ref{sec:noCJdet}, WoCJ scheme achieves a secure sum rate of $(m_{22 } - m_{12})^+     +    (m_{11 } - m_{21})^+ $ without using cooperative jamming.    
To prove its optimality, we provide an outer bound on the secure  capacity region of the deterministic channel, given in  Lemma~\ref{lm:detconverse} (see  Section~\ref{sec:converseDet}). The derived  outer bound  reveals that the secure sum rate achieved by WoCJ scheme is indeed optimal if the conditions in \eqref{eq:capdetcond1} and \eqref{eq:capdetcond2} are satisfied.  
In Remark~\ref{rk:detsy} (see  Section~\ref{sec:converseDet}), we show how to prove  Theorem~\ref{thm:capacitydetasym} by using the derived outer bound.
\end{proof}

 \vspace{3pt}

\begin{example}  \label{exp:det}
To interpret the result in Theorem~\ref{thm:capacitydetasym}, we consider  a setting with $m_{11}=m_{22}= 3, m_{21}=2$ and $m_{12}=1$. For this setting the conditions in \eqref{eq:capdetcond1} and \eqref{eq:capdetcond2} are satisfied. This implies from Theorem~\ref{thm:capacitydetasym}  that  WoCJ scheme achieves the optimal secure sum capacity  without using cooperative jamming. For this setting the  optimal secure sum capacity is characterized as $C_{\text{sum}} = 3$ bits/channel use.
\end{example}

 \vspace{3pt}
 
For the symmetric case with $\md  =  m_{11}  =  m_{22}$,  $\mc  =  m_{12}   =  m_{21}$, and $\alpha = \frac{\mc}{\md}$ (see \eqref{eq:detsym} and \eqref{eq:detsyma}),  Theorem~\ref{thm:capacitydetasym} reveals that if the following condition is satisfied,
\begin{align}
 0 \leq  \alpha \leq  2/3    \label{eq:symcond}
 \end{align}
 then WoCJ scheme  achieves the optimal secure sum capacity,  $ C_{\text{sum}}   = 2  (\md - \mc)$, without using cooperative jamming. 
More interestingly, for this symmetric case,  condition \eqref{eq:symcond} is indeed \emph{sufficient} and \emph{necessary}  for WoCJ scheme to be optimal in terms of secure sum capacity\footnote{When we determine whether a condition is necessary for a scheme to be optimal, we just focus on the regime where the secure sum capacity is strictly positive.}.  A more general result on the  symmetric case is stated in the following theorem.

\begin{figure}[t!]
\centering
\includegraphics[width=8.9cm]{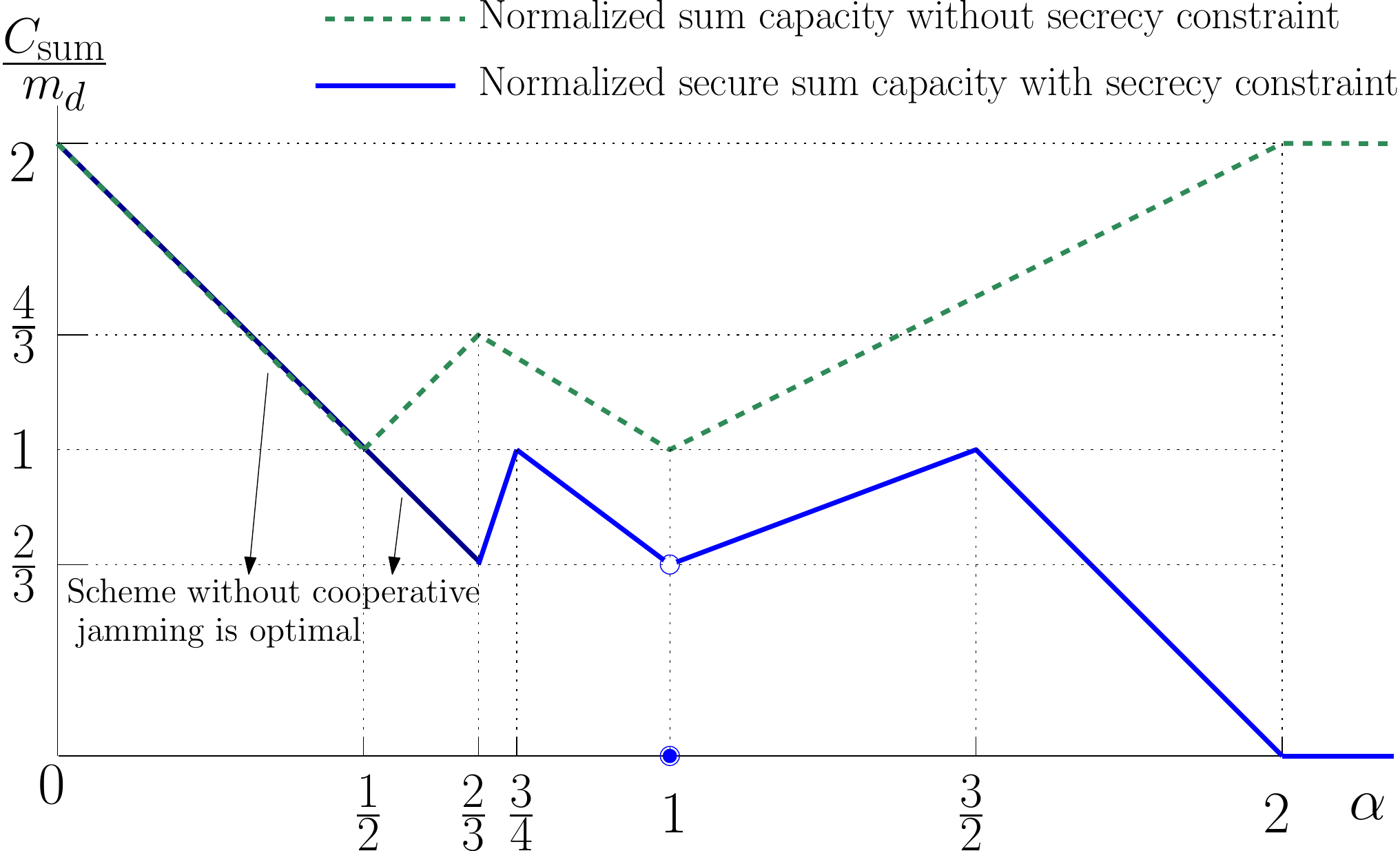}
\caption{Normalized  sum capacity vs. $\alpha$ for the two-user \emph{symmetric} deterministic interference channel with and without secrecy constraints, where $\alpha = \frac{\mc}{\md}$.  Note that a simple scheme  without using cooperative jamming achieves the optimal secure sum capacity if and only if  $\alpha \in [0,  \frac{2}{3}]$.}
\label{fig:Scapacity}
\end{figure}

\begin{theorem}  \label{thm:capacitydet}
For the two-user symmetric  deterministic interference channel defined in Section~\ref{sec:systemDeterministic}, the  normalized  secure sum capacity $\frac{C_{\text{sum}}}{\md}$ is characterized as 
\begin{subnumcases}
{\frac{C_{\text{sum}}}{\md}  =} 
     2(1- \alpha)    &    for   \ $ 0 \leq \alpha \leq  \frac{2}{3}$        \label{thm:capacitydet1} \\
    2(2\alpha- 1)  &  for \ $\frac{2}{3}  \leq \alpha \leq  \frac{3}{4}$    \label{thm:capacitydet2} \\ 
        2(1 -  \frac{2\alpha}{3})  &  for  \   $\frac{3}{4}  \leq \alpha <  1$    \label{thm:capacitydet3} \\ 
               0   &  for  \  $\alpha =  1$                                         \label{thm:capacitydet4}  \\ 
            \frac{2\alpha}{3}  &  for  \  $1  < \alpha \leq  \frac{3}{2}$    \label{thm:capacitydet5} \\ 
                        2(2- \alpha)  &  for \   $\frac{3}{2} \leq  \alpha \leq 2$   \label{thm:capacitydet6} \\ 
                                               0  &  for  \   $2\leq  \alpha$ .  \label{thm:capacitydet7}
\end{subnumcases}
Moreover,  a simple scheme without using cooperative jamming, that is, WoCJ scheme, achieves the optimal secure sum capacity if and only if  $\alpha \in [0,  \frac{2}{3}]$.
\end{theorem}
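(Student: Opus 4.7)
The plan is to establish matching achievability and converse bounds in each of the seven regimes of $\alpha$, and then to read off the ``if and only if'' statement by directly comparing $C_{\text{sum}}$ with the WoCJ secure sum rate $R_{\text{WoCJ}}=2(\md-\mc)^+$.

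First I would handle achievability. Specializing Theorem~\ref{thm:capacitydetasym} to the symmetric case by setting $m_{11}=m_{22}=\md$ and $m_{12}=m_{21}=\mc$ reduces the two sufficient conditions \eqref{eq:capdetcond1}--\eqref{eq:capdetcond2} to the single requirement $\md + (\md-\mc)^+ \geq 2\mc$, which for $\alpha \in [0,1]$ is exactly $\alpha \leq 2/3$; the WoCJ scheme then attains $2(\md-\mc)=2\md(1-\alpha)$, matching \eqref{thm:capacitydet1}. For the remaining regimes $\alpha \in (2/3,1)$ and $\alpha \in (1,2)$, where WoCJ no longer suffices, I would employ structured cooperative-jamming schemes in which each transmitter splits its signal into a private part kept at the noise floor of its unintended receiver together with a jamming/common part whose jamming contribution cancels at the legitimate receiver --- the standard deterministic counterpart of the Gaussian cooperative-jamming constructions. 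The corner cases $\alpha=1$ and $\alpha \geq 2$ trivially achieve zero rate.

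Next I would prove matching converses. For $\alpha \in [0,2/3]$ this step is immediate: the outer bound Lemma~\ref{lm:detconverse} advertised in Section~\ref{sec:converseDet}, specialized to $m_{11}=m_{22}=\md$ and $m_{12}=m_{21}=\mc$, yields $C_{\text{sum}} \leq 2(\md-\mc)$, which coincides with the WoCJ rate and proves both \eqref{thm:capacitydet1} and the ``if'' direction of the final claim. For $\alpha \in (2/3,1)$ and $\alpha \in (1,2)$ one needs a tighter single-letter bound; I would combine Fano's inequality with the two secrecy constraints $\Hen(w_k|\yv_{\ell}^{\bln}) \geq \Hen(w_k)-\bln\epsilon$, choosing the side information as a portion of a cross-link channel output rather than the usual genie message, so that the resulting entropy differences evaluate under the deterministic input-output relations \eqref{eq:detTHIC1}--\eqref{eq:detTHIC2} to the piecewise-linear upper envelope in \eqref{thm:capacitydet2}--\eqref{thm:capacitydet7}. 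For $\alpha=1$ the two receivers see identical linear functions of the inputs, so any reliable bit is also decodable at the eavesdropper and secrecy forces $C_{\text{sum}}=0$; for $\alpha \geq 2$ each direct-link contribution is a deterministic function of the cross-link observation at the eavesdropper, which likewise rules out positive secrecy.

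With the capacity formula established, the ``if and only if'' claim follows by arithmetic: $R_{\text{WoCJ}}=2\md(1-\alpha)^+$ equals $C_{\text{sum}}$ only on $[0,2/3]$, since $2(1-\alpha)<2(2\alpha-1)$ when $\alpha>2/3$, $2(1-\alpha)<2(1-2\alpha/3)$ when $\alpha \in (3/4,1)$, and $R_{\text{WoCJ}}=0<C_{\text{sum}}$ when $\alpha \in (1,2)$. The main obstacle I anticipate is the converse in the intermediate regime $\alpha \in (2/3,1)$: because the paper's core new tool (Lemma~\ref{lm:detconverse}) is designed precisely to be tight at $\alpha \leq 2/3$, obtaining the matching $2(2\alpha-1)$ and $2(1-2\alpha/3)$ upper bounds will require a genuinely different side-information choice from the genie-aided construction that the authors explicitly flag as loose under secrecy constraints.
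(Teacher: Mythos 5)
Your overall architecture (matching achievability and converse regime by regime, then reading off the ``if and only if'' by arithmetic) mirrors the paper, and several pieces are fine: the specialization of Theorem~\ref{thm:capacitydetasym} to $\alpha\le 2/3$, the $\alpha=1$ argument (here $\yv_1^{\bln}=\yv_2^{\bln}$, so secrecy kills both rates), the $\alpha\ge 2$ argument (the top $\md$ bits of each $\xv_k$ appear interference-free at the eavesdropper), and the final comparison $2(1-\alpha)<2(2\alpha-1)$ for $\alpha>2/3$, etc. The genuine gap is the converse for the regimes $\frac{2}{3}<\alpha<1$ and $1<\alpha<2$: you never state, let alone derive, the bounds that make \eqref{thm:capacitydet2}, \eqref{thm:capacitydet3}, \eqref{thm:capacitydet5} and \eqref{thm:capacitydet6} tight, offering only an unspecified ``portion of a cross-link output'' as side information and explicitly flagging the intermediate regime as an unresolved obstacle. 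In particular, regimes \eqref{thm:capacitydet3} and \eqref{thm:capacitydet5} cannot be obtained from any sum-rate bound of the $R_1+R_2$ type you sketch; they require weighted bounds of the form $2R_1+R_2$ and $2R_2+R_1$ (the paper's \eqref{eq:detup4}--\eqref{eq:detup5}, which in the symmetric case give $3(R_1+R_2)\le 2(3\md-2\mc)$ for $\alpha<1$ and $3(R_1+R_2)\le 2\mc$ for $\alpha>1$), while regime \eqref{thm:capacitydet6} needs the single-rate bounds \eqref{eq:detup1}--\eqref{eq:detup2}. Without identifying this structure, the plan does not yield the piecewise-linear envelope.

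You also misread the scope of Lemma~\ref{lm:detconverse}: it is not ``tight only at $\alpha\le 2/3$.'' Bound \eqref{eq:detup3} evaluated in the symmetric case equals $2(2\mc-\md)$ for $\frac{2}{3}\le\alpha\le 1$ and therefore already gives \eqref{thm:capacitydet2}; and the remaining bounds \eqref{eq:detup1}, \eqref{eq:detup2}, \eqref{eq:detup4}, \eqref{eq:detup5} --- the paper's extension of the outer bounds of \cite{MM:14o}, proved in Appendix~\ref{sec:conversesome} by combining Fano, the secrecy constraints, and genie-like top-bit side information --- cover \eqref{thm:capacitydet3}, \eqref{thm:capacitydet5}, \eqref{thm:capacitydet6} and \eqref{thm:capacitydet7}; only $\alpha=1$ needs the separate argument you correctly give. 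On the achievability side, for $\alpha\in(\frac{2}{3},1)\cup(1,2)$ the paper imports the cooperative-jamming schemes of \cite{GTJ:15} (sketched in Appendix~\ref{sec:achisome}); your generic appeal to ``standard'' constructions is acceptable in spirit, but your description of the mechanism is off: in the deterministic schemes the jamming symbol is placed so that it \emph{aligns with the other user's common data at the eavesdropping receiver} and lands on harmless levels at the legitimate receiver, rather than ``cancelling at the legitimate receiver.'' In short, the proposal is sound for $\alpha\le 2/3$, $\alpha=1$ and $\alpha\ge 2$, but the core converse machinery for the two intermediate regimes --- precisely the weighted-rate bounds --- is missing.
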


\begin{proof}
The converse is relegated to Section~\ref{sec:converseDet} (see Lemma~\ref{lm:detconverse} and Remark~\ref{rk:detasy} in Section~\ref{sec:converseDet}). 
For the achievability, note that WoCJ scheme  achieves a secure sum rate of $2  (\md - \mc)$ for this symmetric setting, again, \emph{without using cooperative jamming}. The secure sum rate achieved by WoCJ  scheme  is optimal when  $\alpha \in [0,  \frac{2}{3}]$ --- the optimality is proved through the derived converse.
For the other regimes with strictly positive secure sum capacity, i.e., $\alpha \in  (\frac{2}{3}, 1) \cup (1 , 2)$,  the secure sum rate achieved by WoCJ scheme is \emph{not} optimal. When $\alpha \in  (\frac{2}{3}, 1) \cup (1 , 2)$, the schemes \emph{with cooperative jamming} have been shown in \cite{GTJ:15} to achieve the optimal secure sum  capacity of this symmetric setting (see Appendix~\ref{sec:achisome} for the sketch of the schemes).
\end{proof}

Fig.~\ref{fig:Scapacity}  depicts the  normalized secure sum capacity  with  secrecy constraint (cf.~Theorem~\ref{thm:capacitydet}), as well as the normalized sum capacity without  secrecy constraint (cf.~\cite{ETW:08}), for the two-user \emph{symmetric} deterministic interference channel.  
Note that WoCJ scheme achieves the optimal secure sum capacity \emph{if and only if}  $\alpha \in [0,  \frac{2}{3}]$.
Also note that secrecy constraint  incurs no capacity penalty \emph{if and only if} $\alpha \in [0,  \frac{1}{2}]$.

\subsection{Results for the Gaussian channel  \label{sec:mainresultG}}

Let us now focus on the Gaussian channel defined in Section~\ref{sec:systemGaussian}.
For this  Gaussian channel,  we identify a regime where  the simple scheme without cooperative jamming, that is, GWC-TIN scheme described in  Section~\ref{sec:noCJGau}, achieves the optimal secure sum capacity to within a constant gap.   The result is stated in the following theorem.

\begin{theorem}  \label{thm:GaussianNCJ}
For the two-user  Gaussian interference channel defined in Section~\ref{sec:systemGaussian}, where $\alpha_{k\ell} $ denotes the channel strength from transmitter~$\ell$ to receiver~$k$, $\forall k, \ell \in \{1,2\}$, if the following the conditions are satisfied,  
\begin{align}
    \alpha_{22} + (\alpha_{11 } - \alpha_{12})^+       &\geq   \alpha_{21}+  \alpha_{12}     \label{eq:capGaussian1} \\
    \alpha_{11} + (\alpha_{22 } - \alpha_{21})^+      & \geq   \alpha_{21}+  \alpha_{12}     \label{eq:capGaussian2} 
\end{align}
then the simple scheme of using Gaussian wiretap codebook at each transmitter (without using cooperative jamming) and treating interference as noise at each intended receiver (that is, GWC-TIN scheme) achieves the optimal secure sum capacity to within a constant gap. 
Moreover, given the conditions of \eqref{eq:capGaussian1} and \eqref{eq:capGaussian2}, the optimal secure sum capacity $C_{\text{sum}}$ satisfies  
 \begin{align}
   \log \bigl(  1+    \frac{P^{\alpha_{11} - \alpha_{21} }}{2}  \bigr)  +  \log \bigl(  1+     \frac{P^{\alpha_{22} -  \alpha_{12} }}{ 2 }\bigr)  -  2 \leq  C_{\text{sum}}    \leq   \log (1+ 2 P^{\alpha_{22}-\alpha_{12}}) +   \log (1+ 2P^{\alpha_{11}-\alpha_{21}})  + 4, \label{eq:capa255}  
\end{align}
where the lower bound is achieved by GWC-TIN scheme. 
\end{theorem}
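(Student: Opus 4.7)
The plan is to handle achievability and converse separately, with the converse being the primary task. Achievability of the left inequality in \eqref{eq:capa255} follows directly from Section~II.D: the GWC-TIN scheme there attains the rate pair \eqref{eq:Rk111}--\eqref{eq:Rk432}, whose sum equals the stated lower bound. This part does not require conditions \eqref{eq:capGaussian1}--\eqref{eq:capGaussian2}; those conditions are needed only to make the converse tight up to a constant.

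For the converse, I would mirror the strategy of the deterministic-case argument (Lemma~1 in Section~IV), replacing finite-field arguments with Gaussian-entropy analogs. Starting from Fano's inequality and the secrecy constraints \eqref{eq:defsecrecy1}--\eqref{eq:defsecrecy2}, for $\{k,\ell\}=\{1,2\}$ one would write
\[ nR_k \le I(w_k;y_k^n) - I(w_k;y_\ell^n) + 2n\epsilon \le I(w_k;y_k^n\mid y_\ell^n) + 2n\epsilon, \]
where the first step combines $I(w_k;y_k^n)\ge H(w_k)-n\epsilon$ with $I(w_k;y_\ell^n)\le n\epsilon$, and the second uses the identity $I(w_k;y_k^n)-I(w_k;y_\ell^n)=I(w_k;y_k^n\mid y_\ell^n)-I(w_k;y_\ell^n\mid y_k^n)$ together with non-negativity of the last term. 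Since $x_\ell^n$ is independent of $w_k$, the Markov chain $w_k\to x_k^n\to(y_k^n,y_\ell^n)$ gives $I(w_k;y_k^n\mid y_\ell^n)\le I(x_k^n;y_k^n\mid y_\ell^n)$. Summing over $k=1,2$ reduces the converse to bounding $\sum_k I(x_k^n;y_k^n\mid y_\ell^n)$ by the right-hand side of \eqref{eq:capa255}.

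The technical heart is then to show $I(x_k^n;y_k^n\mid y_\ell^n)=h(y_k^n\mid y_\ell^n)-h(y_k^n\mid y_\ell^n,x_k^n)\le n\log(1+2P^{\alpha_{kk}-\alpha_{\ell k}})+O(1)$. For the first entropy, I would use $h(y_k^n\mid y_\ell^n)\le h(y_k^n-c\,y_\ell^n)$ with $c=\sqrt{P^{\alpha_{k\ell}-\alpha_{\ell\ell}}}\,e^{j(\theta_{k\ell}-\theta_{\ell\ell})}$ chosen to annihilate the $x_\ell^n$ contribution; Gaussian max-entropy then yields a bound scaling like $\log(1+P^{\alpha_{kk}}+P^{\alpha_{k\ell}+\alpha_{\ell k}-\alpha_{\ell\ell}})$. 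For the second entropy, conditioning on $x_k^n$ makes $(y_k,y_\ell)$ a function only of $x_\ell$ and the independent Gaussian noise, and an analogous rescaling-and-subtraction of $y_\ell$ yields a lower bound of order $\log(\pi e(1+P^{\alpha_{k\ell}-\alpha_{\ell\ell}}))$. Under conditions \eqref{eq:capGaussian1}--\eqref{eq:capGaussian2}, these two terms are designed so that the $P^{\alpha_{k\ell}+\alpha_{\ell k}-\alpha_{\ell\ell}}$ contribution in the upper piece is absorbed by the matching term in the lower piece, leaving $P^{\alpha_{kk}-\alpha_{\ell k}}$ as the dominant residual.

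I expect the principal obstacle to be exactly this last step: ensuring that conditions \eqref{eq:capGaussian1}--\eqref{eq:capGaussian2} are exploited sharply enough to cut the effective post-cancellation signal power from $P^{\alpha_{kk}}$ down to $P^{\alpha_{kk}-\alpha_{\ell k}}$. As noted in the introduction, the conventional genie-aided IC converse is too loose here, so a direct mutual-information and differential-entropy argument, guided by the deterministic-case identity, is required. Once the per-user bound is in place, summing over $k=1,2$ gives the right-hand side of \eqref{eq:capa255}, and a direct comparison with \eqref{eq:Rk111}--\eqref{eq:Rk432} shows that the gap is bounded by an absolute constant (roughly $2\log 4+6$ bits), independent of $P$ and of the channel strengths $\alpha_{k\ell}$, establishing the constant-gap optimality of GWC-TIN.
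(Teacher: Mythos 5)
Your achievability part and your opening manipulations are fine: Fano plus the secrecy constraints give $nR_k \le I(w_k;y_k^n)-I(w_k;y_\ell^n)+2n\epsilon$, the identity $I(w_k;y_k^n)-I(w_k;y_\ell^n)=I(w_k;y_k^n\mid y_\ell^n)-I(w_k;y_\ell^n\mid y_k^n)$ is correct, and the data-processing step to $I(x_k^n;y_k^n\mid y_\ell^n)$ is legitimate. The genuine gap is your ``technical heart'': the inequality $I(x_k^n;y_k^n\mid y_\ell^n)\le n\log(1+2P^{\alpha_{kk}-\alpha_{\ell k}})+O(1)$ is false. After you drop $-I(w_k;y_\ell^n\mid y_k^n)\le 0$ and apply data processing, nothing in the remaining argument uses reliability or secrecy, so the bound would have to hold for every admissible independent input distribution; but for i.i.d.\ full-power Gaussian inputs in the symmetric case $\alpha_{11}=\alpha_{22}=1$, $\alpha_{12}=\alpha_{21}=\alpha\in(0,2/3]$, one gets $h(y_1^n\mid y_2^n)\approx n\log(\pi e P)$ (your cancellation $y_1-c\,y_2$ removes $x_2$ but leaves the full-power direct term in $x_1$) while $h(y_1^n\mid y_2^n,x_1^n)\approx n\log(\pi e)$, so $I(x_1^n;y_1^n\mid y_2^n)\approx n\log P$, not $n(1-\alpha)\log P$. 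Your own accounting already shows this: the upper piece scales as $\log\bigl(1+P^{\alpha_{kk}}+P^{\alpha_{k\ell}+\alpha_{\ell k}-\alpha_{\ell\ell}}\bigr)$ while the lower piece only contributes exponent $(\alpha_{k\ell}-\alpha_{\ell\ell})^+$, so the residual exponent is $\alpha_{kk}-(\alpha_{k\ell}-\alpha_{\ell\ell})^+$, not $\alpha_{kk}-\alpha_{\ell k}$; the hoped-for ``absorption'' never touches the direct-link power, and conditions \eqref{eq:capGaussian1}--\eqref{eq:capGaussian2} cannot enter this per-user quantity in the required way. In effect your relaxation is exactly the genie bound obtained by handing $y_\ell^n$ to receiver $k$, which in the symmetric case gives sum GDoF about $2$ instead of $2(1-\alpha)$ --- precisely the looseness the paper warns about for genie-aided arguments.

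The paper's converse avoids this loss by never splitting into per-user conditional mutual informations. It keeps the paired four-term difference $h(y_2^n\mid w_1)-h(y_1^n\mid w_1)+h(y_1^n\mid w_2)-h(y_2^n\mid w_2)$ from summing the two secrecy-augmented Fano bounds, and introduces the auxiliary noisy signals $s_{11}(t)=\sqrt{P^{(\alpha_{11}-\alpha_{12})^+}}e^{j\theta_{11}}x_1(t)+\tilde z_1(t)$ and $s_{12}(t)=\sqrt{P^{\alpha_{12}}}e^{j\theta_{12}}x_2(t)+z_1(t)$. A chain-rule identity, $h(s_{12}^n\mid s_{11}^n,w_1)=h(y_1^n\mid x_1^n,s_{11}^n,w_1)$, converts the first difference into $-I(y_1^n;x_1^n\mid s_{11}^n,w_1)-J_1+J_2-J_3+J_4$, where $J_1,J_3\ge n\log(\pi e)$, $J_2\le n\log(4\pi e)$, and the only large term is $J_4=h(y_2^n\mid s_{12}^n,s_{11}^n,w_1)\le n\log\bigl(\pi e\bigl(1+P^{\alpha_{22}-\alpha_{12}}+P^{\alpha_{21}-(\alpha_{11}-\alpha_{12})^+}\bigr)\bigr)$, because conditioning on $(s_{11},s_{12})$ pins down $y_2$ up to those residual powers (Lemma~\ref{lm:Jbounds}); the conditions \eqref{eq:capGaussian1}--\eqref{eq:capGaussian2} are then used only at the end, to simplify the resulting bound to $\log(1+2P^{\alpha_{22}-\alpha_{12}})+\log(1+2P^{\alpha_{11}-\alpha_{21}})+4$. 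The negative conditional entropies $-h(y_1^n\mid w_1)$ and $-h(y_2^n\mid w_2)$ that your relaxation throws away are exactly what enables this cancellation, so to repair your proof you would need to retain that pairing (or an equivalent auxiliary-signal construction) rather than bound each user through $I(x_k^n;y_k^n\mid y_\ell^n)$.
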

 \vspace{3pt}
\begin{proof}
As discussed in Section~\ref{sec:noCJGau}, GWC-TIN scheme achieves a secure sum rate of  $R_1 + R_2 = \log \bigl(  1+    \frac{P^{\alpha_{11} - \alpha_{21} }}{2}  \bigr)  +  \log \bigl(  1+     \frac{P^{\alpha_{22} -  \alpha_{12} }}{ 2 }\bigr)  -  2$  (see \eqref{eq:Rk111} and \eqref{eq:Rk432}). 
To prove the optimality of GWC-TIN scheme, we provide an upper bound on the secure  sum capacity  of the Gaussian channel, given in  Lemma~\ref{lm:gupper} (see  Section~\ref{sec:converse}). The derived  upper bound  reveals that, if the conditions in \eqref{eq:capGaussian1} and \eqref{eq:capGaussian2} are satisfied,  then the achievable secure sum rate of GWC-TIN scheme  indeed approaches the secure sum capacity to within a constant gap.  
In Remark~\ref{rk:gauas} (see  Section~\ref{sec:converse}), we show how to prove  Theorem~\ref{thm:GaussianNCJ} by using the derived upper bound.
\end{proof}

 \vspace{3pt}

\begin{example}  \label{exp:detGau}
To interpret the result in Theorem~\ref{thm:GaussianNCJ}, we consider  a setting with $\alpha_{11}=2, \alpha_{22}= 3, \alpha_{21}=1$ and $\alpha_{12}=2$. For this setting the conditions in \eqref{eq:capGaussian1} and \eqref{eq:capGaussian2} are satisfied. This implies from Theorem~\ref{thm:GaussianNCJ}  that  GWC-TIN scheme achieves the optimal secure sum capacity to within a constant gap,  without using cooperative jamming. 
\end{example}

 \vspace{5pt}

In \eqref{eq:capa255}, a secure sum capacity is characterized to within a constant gap. This directly implies the characterization of the secure sum GDoF.   
The following GDoF result   is concluded from Theorem~\ref{thm:GaussianNCJ}.
 \vspace{1pt}
\begin{corollary}[GDoF result] \label{corr:GDoF}
For the two-user  Gaussian interference channel,  if the conditions of \eqref{eq:capGaussian1} and \eqref{eq:capGaussian2} are satisfied,  then GWC-TIN scheme  achieves the optimal secure sum GDoF, which  is 
\begin{align}
 d_{\text{sum}}^{\star} &  = (\alpha_{22 } - \alpha_{12})^+     +    (\alpha_{11 } - \alpha_{21})^+  .  \label{eq:GDoFGaussianasy2}
 \end{align}
\end{corollary}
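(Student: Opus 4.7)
The plan is to derive Corollary~\ref{corr:GDoF} as an immediate consequence of Theorem~\ref{thm:GaussianNCJ}, since a constant-gap characterization of $C_{\text{sum}}$ always collapses to an exact GDoF characterization after normalization by $\log P$ and taking $P\to\infty$. First I would invoke Theorem~\ref{thm:GaussianNCJ} to obtain, under conditions \eqref{eq:capGaussian1}--\eqref{eq:capGaussian2}, the sandwich
\begin{align*}
\log \Bigl(1+\tfrac{P^{\alpha_{11}-\alpha_{21}}}{2}\Bigr)+\log \Bigl(1+\tfrac{P^{\alpha_{22}-\alpha_{12}}}{2}\Bigr)-2 \;\leq\; C_{\text{sum}} \;\leq\; \log(1+2P^{\alpha_{22}-\alpha_{12}})+\log(1+2P^{\alpha_{11}-\alpha_{21}})+4,
\end{align*}
where the lower bound is achieved by the GWC-TIN scheme described in Section~\ref{sec:noCJGau}.

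Second, I would divide through by $\log P$ and pass to the limit $P\to\infty$ on both sides, using the elementary fact that for any real $\gamma$,
\begin{equation*}
\lim_{P\to\infty}\frac{\log(1+cP^{\gamma})}{\log P}=\max\{\gamma,0\}=(\gamma)^{+}
\end{equation*}
for every positive constant $c$, so the additive constants $-2$ and $+4$ and the multiplicative factors $1/2$ and $2$ disappear in the GDoF limit. Applying this separately to the two logarithmic terms of each bound yields
\begin{equation*}
\lim_{P\to\infty}\frac{1}{\log P}\!\left[\log\!\Bigl(1+\tfrac{P^{\alpha_{11}-\alpha_{21}}}{2}\Bigr)+\log\!\Bigl(1+\tfrac{P^{\alpha_{22}-\alpha_{12}}}{2}\Bigr)\right]=(\alpha_{11}-\alpha_{21})^{+}+(\alpha_{22}-\alpha_{12})^{+},
\end{equation*}
and the identical value for the upper bound.

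Third, I would combine the two limits with the definition $d_{\text{sum}}\defeq\lim_{P\to\infty}C_{\text{sum}}/\log P$ given in~\eqref{eq:defGDoF} via a standard squeeze argument: both the achievability curve (the GWC-TIN lower bound) and the converse curve (the upper bound of Lemma~\ref{lm:gupper} invoked inside Theorem~\ref{thm:GaussianNCJ}) converge to the same value, pinning $d_{\text{sum}}^{\star}=(\alpha_{22}-\alpha_{12})^{+}+(\alpha_{11}-\alpha_{21})^{+}$. Since achievability is furnished by the GWC-TIN scheme itself (no cooperative jamming required), this establishes that GWC-TIN is GDoF-optimal in the regime \eqref{eq:capGaussian1}--\eqref{eq:capGaussian2}.

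There is really no hard step here: all the substantive work was done in Theorem~\ref{thm:GaussianNCJ}, and the only thing to check is that the constant gap is absorbed in the $\log P$ normalization. The single subtle point worth stating carefully is the case $\alpha_{11}-\alpha_{21}<0$ or $\alpha_{22}-\alpha_{12}<0$, where the corresponding term in the lower bound is $\log(1+\tfrac{1}{2}P^{-|\cdot|})\to 0$ but still contributes $(\cdot)^{+}=0$ to the GDoF, so the formula $(\alpha_{kk}-\alpha_{\ell k})^{+}$ is the correct one in all cases.
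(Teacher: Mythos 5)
Your proposal is correct and follows exactly the paper's route: the paper treats Corollary~\ref{corr:GDoF} as an immediate consequence of the constant-gap characterization in Theorem~\ref{thm:GaussianNCJ}, and your normalization by $\log P$ with the squeeze argument (including the careful handling of the case $\alpha_{kk}-\alpha_{\ell k}<0$) is precisely the intended deduction. Nothing is missing.
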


 \vspace{8pt}

For the \emph{symmetric} case with $\alpha_{11}  =  \alpha_{22}$ and $\alpha_{12}   =  \alpha_{21}$,  Theorem~\ref{thm:GaussianNCJ} directly implies that if the following condition is satisfied,
\begin{align}
 0 \leq  \frac{\alpha_{12}}{\alpha_{11}}  \leq  \frac{2}{3}    \label{eq:symcondGau}
 \end{align}
 then GWC-TIN scheme  achieves the optimal secure sum capacity to within a constant gap.   The result on the  symmetric case is stated in the following corollary.

 \vspace{3pt}

\begin{corollary} [Symmetric Gaussian channel] \label{cor:sym}
For the two-user  symmetric Gaussian  interference channel  with $\alpha_{11}  =  \alpha_{22}$ and $\alpha_{12}   =  \alpha_{21}$, if the condition of  $0 \leq  \frac{\alpha_{12}}{\alpha_{11}}  \leq  \frac{2}{3}$ is satisfied,  then GWC-TIN scheme  achieves the optimal secure sum capacity to within a constant gap.  
Under this condition, the optimal secure sum capacity $C_{\text{sum}}$ satisfies  
\[
 2 \log \bigl(  1+    \frac{P^{\alpha_{11} - \alpha_{21} }}{2}  \bigr)    -  2    \leq  C_{\text{sum}}    \leq  2 \log \bigl(  1+   2P^{\alpha_{11} - \alpha_{21} }  \bigr)  + 4, \]
while the optimal secure sum GDoF is characterized by 
\[
 d_{\text{sum}}^{\star}  = 2 (\alpha_{11 } - \alpha_{21}). 
\]
\end{corollary}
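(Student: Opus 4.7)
The plan is to derive Corollary 1 as a direct specialization of Theorem 3, so the work is purely a matter of checking that the hypothesis and bounds collapse correctly under the symmetry $\alpha_{11}=\alpha_{22}$ and $\alpha_{12}=\alpha_{21}$.

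First, I would verify that the two conditions \eqref{eq:capGaussian1}--\eqref{eq:capGaussian2} of Theorem~\ref{thm:GaussianNCJ} reduce to the single condition $\alpha_{12}/\alpha_{11}\leq 2/3$. Substituting the symmetric values, both inequalities become $\alpha_{11} + (\alpha_{11}-\alpha_{12})^+ \geq 2\alpha_{12}$. If $\alpha_{11}<\alpha_{12}$, the $(\cdot)^+$ term vanishes and the inequality $\alpha_{11}\geq 2\alpha_{12}$ contradicts the case hypothesis, so the secure sum capacity is trivially small (and the corollary's regime excludes this). If $\alpha_{11}\geq \alpha_{12}$, the inequality simplifies to $2\alpha_{11}-\alpha_{12}\geq 2\alpha_{12}$, i.e., $\alpha_{12}\leq \tfrac{2}{3}\alpha_{11}$. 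Hence the hypothesis of the corollary is exactly equivalent to satisfying both \eqref{eq:capGaussian1} and \eqref{eq:capGaussian2} in the symmetric setting.

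Next I would plug the symmetric parameters into the sandwich \eqref{eq:capa255} supplied by Theorem~\ref{thm:GaussianNCJ}. Since $\alpha_{22}-\alpha_{12} = \alpha_{11}-\alpha_{21}$, the two logarithmic terms on each side merge into a single doubled term, giving directly
\begin{equation*}
2\log\Bigl(1+\frac{P^{\alpha_{11}-\alpha_{21}}}{2}\Bigr) - 2 \;\leq\; C_{\text{sum}} \;\leq\; 2\log\bigl(1+2P^{\alpha_{11}-\alpha_{21}}\bigr) + 4,
\end{equation*}
which is the desired constant-gap characterization. The lower bound is achieved by the GWC-TIN scheme of Section~\ref{sec:noCJGau}, inheriting directly from the achievability part of Theorem~\ref{thm:GaussianNCJ} (no new coding argument is required).

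Finally, for the GDoF statement I would divide by $\log P$ and let $P\to\infty$ in both the upper and lower bounds. Both sides yield $2(\alpha_{11}-\alpha_{21})$, and under the hypothesis $\alpha_{12}/\alpha_{11}\leq 2/3$ (which forces $\alpha_{11}\geq \alpha_{21}$), the positive part can be dropped, giving $d_{\text{sum}}^{\star} = 2(\alpha_{11}-\alpha_{21})$ as stated; alternatively this follows by directly specializing Corollary~\ref{corr:GDoF}. There is no real obstacle here: the whole proof is bookkeeping, with the only mildly delicate step being the case analysis that confirms the symmetric-case condition is precisely $\alpha_{12}/\alpha_{11}\leq 2/3$ rather than something weaker.
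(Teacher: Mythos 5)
Your proposal is correct and follows exactly the paper's route: the corollary is obtained as a direct specialization of Theorem~\ref{thm:GaussianNCJ} (and Corollary~\ref{corr:GDoF}), with the symmetric substitution reducing both conditions \eqref{eq:capGaussian1}--\eqref{eq:capGaussian2} to $\alpha_{12}\leq \tfrac{2}{3}\alpha_{11}$ and collapsing the bounds in \eqref{eq:capa255} to the stated ones. Your case analysis confirming the equivalence of the conditions is slightly more explicit than the paper's, but it is the same argument.
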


 \vspace{3pt}

\section{Converse for the deterministic channel  \label{sec:converseDet}}

For the deterministic channel defined in Section~\ref{sec:systemDeterministic}, we provide a  general outer bound on the secure  capacity region, which is stated in the following  lemma.

\begin{lemma}  \label{lm:detconverse}
For the two-user deterministic interference channel defined in Section~\ref{sec:systemDeterministic},  an  outer bound of secure  capacity region is given  by 
\begin{align}
R_1  & \leq     \max\{ 0,  \  m_{11}  -  ( m_{21} - m_{22} )^+ \}    \label{eq:detup1} \\ 
R_2 & \leq      \max\{ 0, \   m_{22}  -  ( m_{12} - m_{11} )^+ \}       \label{eq:detup2} \\ 
R_1 + R_2  & \leq    \max\{  m_{21}- (m_{11 } - m_{12})^+ ,  \  m_{22}- m_{12}, \ 0  \}     \non\\
 &  \quad  + \max\{  m_{12}- (m_{22 } - m_{21})^+ ,  \  m_{11}- m_{21},  \ 0 \}     \label{eq:detup3} \\ 
2R_1 + R_2  & \leq   \max\{ m_{11}, m_{12} \}  +  (  m_{11}  - m_{21})^+ +    (  m_{22}  - m_{12})^+   \label{eq:detup4} \\ 
2R_2 + R_1  & \leq    \max\{ m_{22}, m_{21} \}  +  (  m_{11}  - m_{21})^+ +    (  m_{22}  - m_{12})^+.   \label{eq:detup5}  
\end{align}
\end{lemma}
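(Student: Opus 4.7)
The plan is to prove all five bounds through a unified approach that combines Fano's inequality (from reliable decoding at each intended receiver) with the two secrecy constraints $\Imu(w_1;\yv_2^\bln) \leq \bln\epsilon$ and $\Imu(w_2;\yv_1^\bln) \leq \bln\epsilon$, and to do so \emph{without} introducing any genie-aided side information. As the paper explicitly flags, handing $\xv_\ell^\bln$ to receiver~$k$ in the Etkin--Tse--Wang style would destroy exactly the equivocation that the secrecy constraint is supplying, so the manipulations must leave the joint distribution induced by the code intact and extract entropy reductions from secrecy rather than from artificial side information.

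For the per-user bounds \eqref{eq:detup1}--\eqref{eq:detup2}, I would start from the chain
\[
\bln R_1 \leq \Imu(w_1;\yv_1^\bln) - \Imu(w_1;\yv_2^\bln) + 2\bln\epsilon = \Imu(w_1;\yv_1^\bln \mid \yv_2^\bln) - \Imu(w_1;\yv_2^\bln \mid \yv_1^\bln) + 2\bln\epsilon
\]
and then exploit the deterministic structure: whenever $m_{21} > m_{22}$, the top $(m_{21}-m_{22})$ bits of $\xv_1^\bln$ appear in $\yv_2^\bln$ with no $\xv_2$-interference, so the secrecy constraint forces them to be (asymptotically) independent of $w_1$. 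Conditioning on these ``clean leakage'' bits then caps the useful entropy of $\xv_1$'s contribution to $\yv_1^\bln$ at $m_{11} - (m_{21}-m_{22})^+$ per channel use, which yields \eqref{eq:detup1}; bound \eqref{eq:detup2} follows by symmetry. For the sum-rate bound \eqref{eq:detup3}, the same manipulation applied to both users gives
\[
\bln(R_1 + R_2) \leq \Imu(w_1;\yv_1^\bln\mid\yv_2^\bln) + \Imu(w_2;\yv_2^\bln\mid\yv_1^\bln) + 4\bln\epsilon ,
\]
and I would then bound each conditional mutual information after single-letterization by the maximum of three candidate expressions (coming from the three possible orderings of direct- and cross-link levels); the maximum form is needed because different regimes of $\{m_{k\ell}\}$ make different candidates binding. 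For the weighted bounds \eqref{eq:detup4}--\eqref{eq:detup5}, I would split $\bln(2R_1 + R_2) = \bln R_1 + \bln(R_1+R_2)$ and bound the extra $\bln R_1$ by $\Hen(\yv_1^\bln) \leq \bln\max\{m_{11},m_{12}\}$ (Fano for user~1 alone), while handling the pair $(R_1,R_2)$ by the sum-rate machinery above; this contributes the two ``isolated-wiretap'' secrecy-penalty terms $(m_{11}-m_{21})^+$ and $(m_{22}-m_{12})^+$, with \eqref{eq:detup5} obtained by symmetry.

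The main obstacle is the single-letterization of $\Imu(w_k;\yv_k^\bln \mid \yv_\ell^\bln)$. The naive chain rule bound $\bln\,\Hen(\yv_k \mid \yv_\ell)$ is too loose: even in the small symmetric example $\md=3$, $\mc=2$, a direct calculation with i.i.d.\ uniform inputs gives $\Hen(\yv_k \mid \yv_\ell) = 3$ per channel use, which already exceeds the true $R_k$ guaranteed by Theorem~\ref{thm:capacitydet}. The delicate step is therefore to apply Fano and secrecy on the \emph{same} conditional-entropy expressions rather than sequentially, identifying precisely which bits at each receiver are ``clean leakage'' of the opposing transmitter and arguing that they cannot carry secret information; this is what converts a loose joint-entropy bound into the sharp sum-of-maxima structure appearing in \eqref{eq:detup3} and \eqref{eq:detup4}--\eqref{eq:detup5}.
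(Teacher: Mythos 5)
Your sketch of the single-user bounds \eqref{eq:detup1}--\eqref{eq:detup2} is essentially sound (the top $(m_{21}-m_{22})^+$ bits of $\xv_1$ are a deterministic function of $\yv_2$, so the secrecy constraint caps their mutual information with $w_1$, and conditioning on them plus data processing through $\xv^{\bln}_{1,1:m_{11}}$ gives the claimed cap), and it is close to the paper's own argument, which conditions on $\yv_2^{\bln}$ together with a partial genie of $\xv_2$ — note in passing that the paper \emph{does} add such side information for these bounds, so your framing that no genie-type terms may be introduced is not accurate. The genuine gap is in the sum-rate bound \eqref{eq:detup3}, which is the bound the theorems actually rest on. After rewriting $\Imu(w_1;\yv_1^{\bln})-\Imu(w_1;\yv_2^{\bln})=\Imu(w_1;\yv_1^{\bln}\mid\yv_2^{\bln})-\Imu(w_1;\yv_2^{\bln}\mid\yv_1^{\bln})$ you drop the negative terms and propose to bound \emph{each} remaining conditional mutual information separately by one of the two maxima. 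That intermediate inequality is false. Since $\Imu(w_1;\yv_1^{\bln}\mid\yv_2^{\bln})=\Hen(w_1\mid\yv_2^{\bln})-\Hen(w_1\mid\yv_1^{\bln},\yv_2^{\bln})$ is essentially $\bln R_1$, a single user can exceed either maximum when the other transmitter acts as a pure cooperative jammer: take the symmetric channel $\md=3$, $\mc=2$, $R_2=0$, let transmitter~1 send two fresh uniform bits $a_1,a_2$ on its top two levels and zero below, and let transmitter~2 send $x_{2,1}=0$ and i.i.d.\ uniform $x_{2,2}=u$, $x_{2,3}=v$. Then $\yv_1=(a_1,\,a_2,\,u)$ gives $R_1=2$ with zero error, while $\yv_2=(0,\,u\oplus a_1,\,v\oplus a_2)$ is independent of $w_1$, so the code satisfies all reliability and secrecy constraints; yet $\Imu(w_1;\yv_1^{\bln}\mid\yv_2^{\bln})=2\bln$ while each maximum in \eqref{eq:detup3} equals $1$. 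What makes the sum bound true is precisely the cross-user cancellation of $\Hen(\yv_1^{\bln})$ and $\Hen(\yv_2^{\bln})$ that you discard: the paper keeps the four-term expression $\Hen(\yv_2^{\bln}|w_1)-\Hen(\yv_1^{\bln}|w_1)+\Hen(\yv_1^{\bln}|w_2)-\Hen(\yv_2^{\bln}|w_2)$ and bounds each \emph{difference} by its maximum, via the auxiliary per-letter signals $\sv_{11}$ (the interference-free top bits of $\xv_1$) and $\sv_{12}$ (the top $m_{12}$ bits of $\xv_2$), a chain-rule cancellation using the independence of $\sv_{12}$ from $(w_1,\xv_1)$, the fact that $\sv_{11}(t)$ is recoverable from $\yv_1(t)$ (so $\J_2\le 0$), and the fact that $(\sv_{11}(t),\sv_{12}(t))$ determine all but $\max\{m_{21}-(m_{11}-m_{12})^+,\,m_{22}-m_{12},\,0\}$ least significant bits of $\yv_2(t)$ (the $\J_4$ bound). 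Your proposal names this as the ``delicate step'' but supplies no mechanism, and the decomposition you commit to forecloses any per-user single-letterization of the kind you describe.

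The plan for \eqref{eq:detup4}--\eqref{eq:detup5} also does not go through as stated. Splitting $2R_1+R_2=R_1+(R_1+R_2)$, bounding the extra $R_1$ by $\max\{m_{11},m_{12}\}$ and the pair by ``the sum-rate machinery'' yields at best $\max\{m_{11},m_{12}\}$ plus the right-hand side of \eqref{eq:detup3}, which is strictly weaker than \eqref{eq:detup4} in relevant regimes (symmetric $\md=4$, $\mc=3$: your route gives $4+2+2=8$, whereas \eqref{eq:detup4} requires $4+1+1=6$, and this is the bound needed at $\alpha=3/4$); moreover the expression $(m_{11}-m_{21})^+ +(m_{22}-m_{12})^+$ that you say the machinery ``contributes'' is not an upper bound on $R_1+R_2$ at all (same example: the secure sum capacity is $4>2$). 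The paper's proof of \eqref{eq:detup4} is structurally different: it pairs one secrecy-based bound, $\bln R_1\le \Hen(\xv^{\bln}_{1,1:\max\{m_{11},m_{21}\}},\xv^{\bln}_{2,1:\max\{m_{12},m_{22}\}})-\Hen(\yv_2^{\bln})+\bln\epsilon'$, with the two Fano-only bounds $\bln R_1\le \Hen(\yv_1^{\bln})-\Hen(\xv^{\bln}_{2,1:m_{12}})+\bln\epsilon_{1,n}$ and $\bln R_2\le \Hen(\yv_2^{\bln})-\Hen(\xv^{\bln}_{1,1:m_{21}})+\bln\epsilon_{2,n}$, so that $\Hen(\yv_2^{\bln})$ cancels and the input-entropy differences produce exactly the $(m_{11}-m_{21})^+$ and $(m_{22}-m_{12})^+$ terms. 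You would need to adopt such a cancellation structure (or an equivalent one) for both the sum and the weighted bounds; as written, the proposal proves only \eqref{eq:detup1}--\eqref{eq:detup2}.
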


 \vspace{5pt}

Before describing the proof of Lemma~\ref{lm:detconverse},  we at first show how to prove Theorems~\ref{thm:capacitydetasym} and \ref{thm:capacitydet} by using the result of Lemma~\ref{lm:detconverse}. 

\begin{remark} [Converse proof of Theorem~\ref{thm:capacitydetasym}] \label{rk:detsy}
The converse of Theorem~\ref{thm:capacitydetasym} follows from bound~\eqref{eq:detup3}.   
Let us consider the two conditions in \eqref{eq:capdetcond1} and  \eqref{eq:capdetcond2}, i.e., $m_{22} + (m_{11 } - m_{12})^+  \geq    m_{21}+  m_{12} $ and $ m_{11} + (m_{22 } - m_{21})^+   \geq   m_{21}+  m_{12} $. 
 If these two conditions are satisfied,  then bound~\eqref{eq:detup3} can be rewritten as  
\begin{align}
R_1 + R_2   & \leq    (m_{22}- m_{12})^+  +    (m_{11}- m_{21})^+ .   \label{eq:detupext3} 
\end{align}
On the other hand, WoCJ scheme achieves a secure sum rate of $R_1 + R_2 = (m_{22 } - m_{12})^+     +    (m_{11 } - m_{21})^+ $  (see \eqref{eq:NoCJdetrate1} and \eqref{eq:NoCJdetrate2}), which matches the  upper bound derived in \eqref{eq:detupext3}. Therefore, 
WoCJ scheme achieves the optimal secure sum capacity if the conditions in \eqref{eq:capdetcond1} and \eqref{eq:capdetcond2} are satisfied.
\end{remark}

 \vspace{5pt}

\begin{remark} [Converse proof of Theorem \ref{thm:capacitydet}] \label{rk:detasy}
Note that Theorem~\ref{thm:capacitydet} is limited to the symmetric case with $\md  =  m_{11}  =  m_{22}$ and  $\mc  =  m_{12}   =  m_{21}$.   The converse of  \eqref{thm:capacitydet1} and \eqref{thm:capacitydet2} in Theorem~\ref{thm:capacitydet} follows from bound~\eqref{eq:detup3}. The converse of \eqref{thm:capacitydet3} - \eqref{thm:capacitydet7} in Theorem~\ref{thm:capacitydet} has been proved in \cite{MM:14o} that considered the two-user symmetric deterministic interference channel with transmitter cooperation.  
The result of \cite{MM:14o} has been extended in this work to the general setting of the deterministic interference channel, stated in \eqref{eq:detup1}, \eqref{eq:detup2}, \eqref{eq:detup4} and \eqref{eq:detup5}. 
\end{remark}

 \vspace{8pt}

As mentioned in Remarks~\ref{rk:detsy} and \ref{rk:detasy}, bound \eqref{eq:detup3} in Lemma~\ref{lm:detconverse}  proves the optimality of the schemes without cooperative jamming in the declared regimes (see~Theorems~\ref{thm:capacitydetasym}  and \ref{thm:capacitydet}).
In what follows we will provide the proof of \eqref{eq:detup3}, while the proofs of \eqref{eq:detup1}, \eqref{eq:detup2}, \eqref{eq:detup4} and \eqref{eq:detup5}  are relegated to Appendix~\ref{sec:conversesome}.

 \subsection{Proof of bound \eqref{eq:detup3} \label{sec:detup3} } 

For the proof of bound \eqref{eq:detup3}, our approach is different from the genie-aided approach that is commonly used in the settings without secrecy constraints (cf.~\cite{ETW:08}). In the genie-aided approach,  some genie-aided information is typically provided to the receivers, which might give a loose bound in our setting.

\begin{figure}[t!]
\centering
\includegraphics[width=13cm]{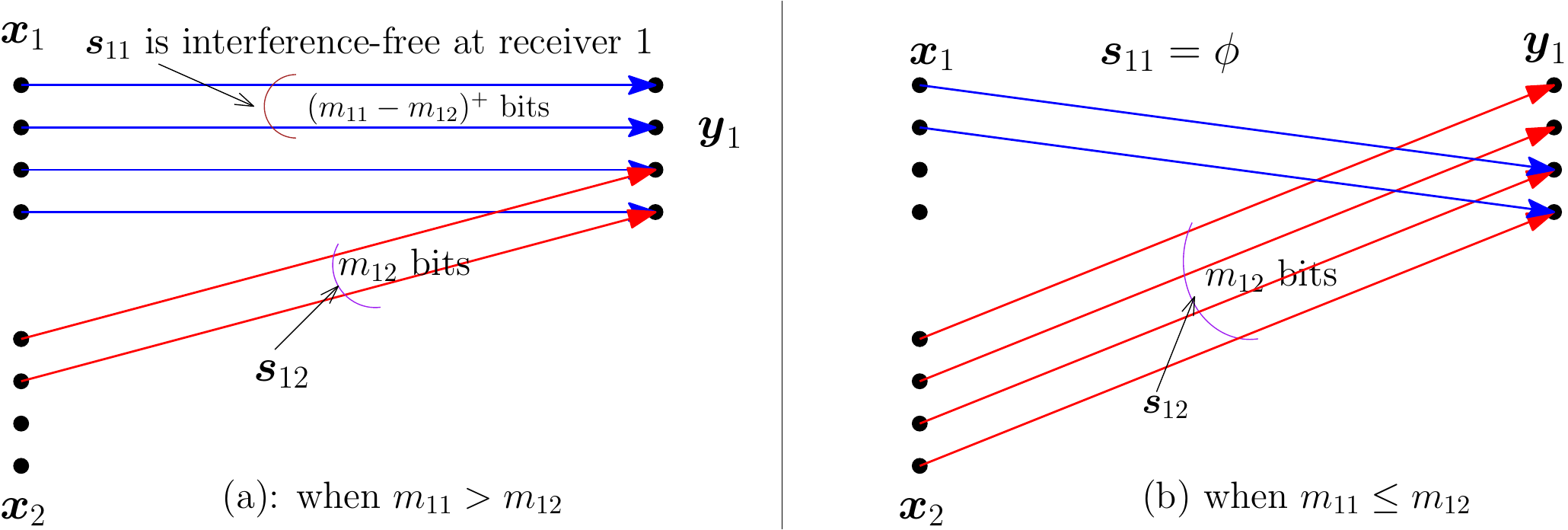}
\caption{Deterministic channel model for receiver~1.  $\sv_{11}$ represents the top $(m_{11} - m_{12})^+ $  bits of $\xv_1$, i.e., the interference-free bits sent from transmitter~1 to receiver~1.  $\sv_{12}$ represents the top $m_{12}$ bits of $\xv_2$, i.e.,  the signal bits sent from transmitter~2 to receiver~1.} 
\label{fig:ICdet}
\end{figure}

At first, let  \[ \xv_{k,1: \tau} (t)  \defeq  [x_{k,1}(t), x_{k,2}(t), \cdots , x_{k, \tau }(t)]^\T \]
that  represents the top (most significant) $\tau$ bits of $\xv_{k} (t)$, and let $\xv^{\bln}_{k,1: \tau} \defeq \{ \xv_{k,1: \tau } (t)\}_{t=1}^{\bln}$, for some positive $\tau$ and $k=1,2$.
We  also define that 
\begin{align}
  \sv_{11}(t)  & \defeq  \xv_{1,1: (m_{11} - m_{12})^+}(t)    \label{eq:defs11det} \\
 \sv_{12}(t)  &\defeq  \xv_{2, 1: m_{12}}  (t) .  \label{eq:defs12det}
\end{align}
In our context, $\sv_{11}(t)$ represents the top $(m_{11} - m_{12})^+ $ bits of $\xv_1(t)$, while $\sv_{12}(t)$ represents the top $ m_{12} $ bits of $\xv_2(t)$ (see Fig.~\ref{fig:ICdet}).
We begin with the rate of user~1:
\begin{align}
  \bln R_1 &= \Hen(w_1)    \nonumber \\
  &= \Imu(w_1; \yv^{\bln}_1) + \Hen(w_1| \yv^{\bln}_1)  \nonumber \\
  &\leq  \Imu(w_1; \yv^{\bln}_1) + \bln \epsilon_{1,n}    \label{eq:Fanodet}  \\
  &\leq  \Imu(w_1;  \yv^{\bln}_1) - \Imu(w_1; \yv^{\bln}_2) +  \bln \epsilon_{1,n} + \bln \epsilon   \label{eq:secrecydet} 
\end{align}
where \eqref{eq:Fanodet} follows from  Fano's inequality,  $\lim_{n\to\infty} \epsilon_{1,n} = 0$;
\eqref{eq:secrecydet} results from secrecy constraint in \eqref{eq:defsecrecy1}, i.e.,  $ \Imu(w_1; \yv^{\bln}_2)  \leq \bln \epsilon$ for an  arbitrary small  $\epsilon$. Similarly, for the rate of user~2 we have 
\begin{align}
  \bln R_2 & \leq  \Imu(w_2; \yv^{\bln}_2) - \Imu(w_2 ; \yv^{\bln}_1) +  \bln \epsilon_{2,n}  + \bln \epsilon    \label{eq:secrecy2det} 
\end{align}
which, together with \eqref{eq:secrecydet}, gives the following bound on the sum rate:
\begin{align}
 & \bln R_1 + \bln R_2 -\bln \epsilon_{1,n} -\bln \epsilon_{2,n} -2 \bln \epsilon     \nonumber \\
  & \leq \Imu(w_1; \yv^{\bln}_1) - \Imu(w_1;  \yv^{\bln}_2)  +  \Imu(w_2 ;  \yv^{\bln}_2) - \Imu(w_2 ;  \yv^{\bln}_1)     \nonumber \\
  & = \Hen(\yv^{\bln}_1) - \Hen(\yv^{\bln}_1| w_1)  -    \Hen(\yv^{\bln}_2) +  \Hen(\yv^{\bln}_2| w_1)  \nonumber\\ &\quad + \Hen(\yv^{\bln}_2) - \Hen(\yv^{\bln}_2| w_2)  -    \Hen(\yv^{\bln}_1) +  \Hen(\yv^{\bln}_1| w_2)       \nonumber \\
  & = \Hen(\yv^{\bln}_2| w_1)  - \Hen(\yv^{\bln}_1| w_1)   +  \Hen(\yv^{\bln}_1| w_2)    - \Hen(\yv^{\bln}_2| w_2).     \label{eq:sumrate1det} 
\end{align}
For the first two terms in the right-hand side of \eqref{eq:sumrate1det}, we have:
\begin{align}
  &  \Hen(\yv^{\bln}_2| w_1)  - \Hen(\yv^{\bln}_1| w_1)       \nonumber \\
 &=  \Hen( \sv_{11}^{\bln}, \yv^{\bln}_2| w_1)  -   \underbrace{\Hen(\sv_{11}^{\bln}| \yv^{\bln}_2, w_1)}_{\defeq \J_1}   - \Hen(\sv_{11}^{\bln}, \yv^{\bln}_1| w_1)    + \underbrace{\Hen(\sv_{11}^{\bln} | \yv^{\bln}_1, w_1) }_{\defeq \J_2}  \label{eq:bd3141det}   \\ 
& =  \Hen(\sv_{11}^{\bln}| w_1)  +  \Hen( \yv^{\bln}_2 | \sv_{11}^{\bln}, w_1)     - \Hen(\sv_{11}^{\bln} | w_1) - \Hen( \yv^{\bln}_1|\sv_{11}^{\bln},  w_1)  - \J_1  + \J_2  \non  \\
 &=    \Hen( \yv^{\bln}_2| \sv_{11}^{\bln}, w_1) - \Hen( \yv^{\bln}_1|\sv_{11}^{\bln},  w_1)   - \J_1  + \J_2      \nonumber   \\
& =    \Hen(\sv_{12}^{\bln},  \yv^{\bln}_2| \sv_{11}^{\bln}, w_1) -  \underbrace{\Hen(\sv_{12}^{\bln} | \yv^{\bln}_2,  \sv_{11}^{\bln}, w_1)}_{\defeq \J_3}    - \Hen( \yv^{\bln}_1|\sv_{11}^{\bln},  w_1)   - \J_1  + \J_2  \label{eq:bd3743det}      \\
 &=    \Hen(\sv_{12}^{\bln}| \sv_{11}^{\bln}, w_1)  +   \underbrace{\Hen( \yv^{\bln}_2| \sv_{12}^{\bln}, \sv_{11}^{\bln}, w_1)}_{\defeq \J_4}   - \Hen( \yv^{\bln}_1|\sv_{11}^{\bln},  w_1)  - \J_1  + \J_2 -\J_3 \label{eq:bd8225det}      \\
  &=   \Hen(\yv^{\bln}_1| x_1^{\bln}, \sv_{11}^{\bln}, w_1)     -  \Hen( \yv^{\bln}_1|\sv_{11}^{\bln},  w_1)    -  \J_1   +  \J_2  - \J_3  + \J_4 \label{eq:bd2535det}      \\ 
&=    -  \Imu (\yv^{\bln}_1;  \xv_1^{\bln}| \sv_{11}^{\bln}, w_1)  - \J_1  + \J_2 - \J_3 +\J_4  \label{eq:bd8256det}   
 \end{align}
where  $\J_1\defeq \Hen(\sv_{11}^{\bln}| \yv^{\bln}_2, w_1)$, $ \J_2 \defeq \Hen(\sv_{11}^{\bln} | \yv^{\bln}_1, w_1) $, $\J_3 \defeq \Hen(\sv_{12}^{\bln} | \yv^{\bln}_2, \sv_{11}^{\bln}, w_1)$ and $\J_4\defeq \Hen( \yv^{\bln}_2| \sv_{12}^{\bln}, \sv_{11}^{\bln}, w_1)$;
 the steps from \eqref{eq:bd3141det} to \eqref{eq:bd8225det}   follow   from  chain rule; 
 \eqref{eq:bd2535det}  follows from that 
\begin{align}
    \Hen(\sv_{12}^{\bln}| \sv_{11}^{\bln}, w_1)    
 = &  \Hen(\sv_{12}^{\bln})   \label{eq:bd8325det}  \\
 = &  \Hen(\sv_{12}^{\bln}| \xv^{\bln}_1, \sv_{11}^{\bln}, w_1)   \label{eq:bd1794det}      \\
 = &  \Hen\bigl( \bigl\{   S^{q- m_{12}} \xv_2 (t) \bigr\}_{t=1}^{\bln} | \xv^{\bln}_1, \sv_{11}^{\bln}, w_1\bigr)  \nonumber   \\
 = &  \Hen\bigl( \bigl\{ S^{q- m_{11}} \xv_1(t)  \oplus   S^{q- m_{12}} \xv_2(t)  \bigr\}_{t=1}^{\bln} | \xv^{\bln}_1, \sv_{11}^{\bln}, w_1\bigr)   \label{eq:bd7420det}       \\
 = &  \Hen(\yv^{\bln}_1| \xv^{\bln}_1, \sv_{11}^{\bln}, w_1)    \nonumber
 \end{align}
where \eqref{eq:bd8325det} and \eqref{eq:bd1794det} use the fact that $\sv_{12}^{\bln}$ is independent of $\sv_{11}^{\bln}, w_1$ and $ \xv^{\bln}_1$;  \eqref{eq:bd7420det} follows from the fact that $\Hen(\av|\bv)=\Hen(\av\oplus \bv | \bv)$ for any random $\av, \bv  \in
\F_{2}^{q}$.    
Going back to \eqref{eq:bd8256det}, we further have
\begin{align}
    \Hen(\yv^{\bln}_2| w_1)  - \Hen(\yv^{\bln}_1| w_1)   & = -  \Imu (\yv^{\bln}_1;  \xv_1^{\bln}| \sv_{11}^{\bln}, w_1)  - \J_1  + \J_2 - \J_3 +\J_4  \label{eq:bd9295det} \\
  & \leq   \J_2 + \J_4   \label{eq:bd3325det} \\
&    \leq \bln \cdot \max\{  m_{21}- (m_{11 } - m_{12})^+ ,   m_{22}- m_{12}, 0  \}    \label{eq:bd92374det}   
 \end{align}
 where \eqref{eq:bd9295det} is from \eqref{eq:bd8256det};  \eqref{eq:bd3325det}  follows from the fact that mutual information and entropy are always nonnegative;
 \eqref{eq:bd92374det} follows from Lemma~\ref{lm:Jboundsdet} (see below).
Similarly, by interchanging the roles of user~1 and user~2, we also have 
\begin{align}
   \Hen(\yv^{\bln}_1| w_2)  - \Hen(\yv^{\bln}_2| w_2)  
   \leq \   \bln \cdot \max\{  m_{12}- (m_{22 } - m_{21})^+ ,  \  m_{11}- m_{21}, \ 0 \}   \label{eq:bd9256det}.     
 \end{align}
  Finally, combining \eqref{eq:sumrate1det}, \eqref{eq:bd92374det} and \eqref{eq:bd9256det} gives the following bound on the sum rate
 \begin{align}
 R_1 + R_2  \leq &  \max\{  m_{21}- (m_{11 } - m_{12})^+ ,  \  m_{22}- m_{12}, \ 0  \}  \non \\
 & +    \max\{  m_{12}- (m_{22 } - m_{21})^+ ,  \  m_{11}- m_{21},  \ 0 \}  + \epsilon_{1,n} \!+\! \epsilon_{2,n}\! +\! 2  \epsilon .  \nonumber 
\end{align}
Letting  $n\to \infty, \  \epsilon_{1,n} \to 0, \ \epsilon_{2,n} \to 0$ and  $\epsilon \to 0$, we  get the desired bound \eqref{eq:detup3}. 
The lemma used in our proof is provided below.

 \vspace{10pt}
\begin{lemma}  \label{lm:Jboundsdet}
For  $ \J_2 = \Hen(\sv_{11}^{\bln} | \yv^{\bln}_1, w_1) $ and $\J_4= \Hen( \yv^{\bln}_2| \sv_{12}^{\bln}, \sv_{11}^{\bln}, w_1)$, we have
\begin{align}
\J_2  & \leq  0     \label{eq:Jb2det}  \\
\J_4  & \leq  \bln \cdot  \max\{  m_{21}- (m_{11 } - m_{12})^+ , \   m_{22}- m_{12},  \ 0  \} .    \label{eq:Jb4det}  
\end{align}
\end{lemma}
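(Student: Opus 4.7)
The plan is to prove the two bounds by combining elementary information-theoretic inequalities with a direct position-by-position analysis of the deterministic channel. For $\J_2$, the key observation is that $\sv_{11}^{\bln}$ is a deterministic function of $\yv_1^{\bln}$, from which the bound follows instantly. For $\J_4$, I will first drop $w_1$ from the conditioning, then single-letterize via the chain rule, and finally count how many bit-positions of $\yv_2(t)$ remain undetermined once $\sv_{11}(t)$ and $\sv_{12}(t)$ are revealed.

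For \eqref{eq:Jb2det}, recall that $\yv_1(t)=S^{q-m_{11}}\xv_1(t)\oplus S^{q-m_{12}}\xv_2(t)$. When $m_{11}\geq m_{12}$, positions $q-m_{11}+1,\ldots,q-m_{12}$ of $\yv_1(t)$ are hit only by the $\xv_1$-contribution (the $\xv_2$-contribution is shifted to strictly lower positions), and those positions contain exactly $x_{1,1}(t),\ldots,x_{1,m_{11}-m_{12}}(t)=\sv_{11}(t)$. When $m_{11}<m_{12}$, the vector $\sv_{11}(t)$ is empty. In either case $\sv_{11}^{\bln}$ is a function of $\yv_1^{\bln}$, so
\[
\J_2=\Hen(\sv_{11}^{\bln}\mid \yv_1^{\bln},w_1)\;\leq\; \Hen(\sv_{11}^{\bln}\mid \yv_1^{\bln})\;=\;0.
\]

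For \eqref{eq:Jb4det}, I first drop $w_1$ and apply the chain rule together with the fact that conditioning reduces entropy to obtain
\[
\J_4\;\leq\; \Hen(\yv_2^{\bln}\mid \sv_{11}^{\bln},\sv_{12}^{\bln})\;\leq\; \sum_{t=1}^{\bln}\Hen\bigl(\yv_2(t)\mid \sv_{11}(t),\sv_{12}(t)\bigr),
\]
so it suffices to bound each single-letter term by $\max\{m_{21}-(m_{11}-m_{12})^+,\,m_{22}-m_{12},\,0\}$. At each time $t$, $\sv_{11}(t)$ reveals the top $(m_{11}-m_{12})^+$ bits of $\xv_1(t)$, which fixes the top $\min\{m_{21},(m_{11}-m_{12})^+\}$ bits of the $\xv_1$-contribution to $\yv_2(t)$; similarly $\sv_{12}(t)$ fixes the top $\min\{m_{22},m_{12}\}$ bits of the $\xv_2$-contribution. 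Hence the undetermined positions of $\yv_2(t)$ form the union of two intervals, each ending at position $q$, with lengths $\bigl(m_{21}-(m_{11}-m_{12})^+\bigr)^+$ and $(m_{22}-m_{12})^+$ respectively. Because the two intervals share the common right endpoint $q$, their union has length $\max\{m_{21}-(m_{11}-m_{12})^+,\,m_{22}-m_{12},\,0\}$, which upper-bounds the per-letter entropy and yields the claim.

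The only delicate point is the last one: since both sources of uncertainty sit at the same least-significant end of $\yv_2(t)$, they overlap in position, and their joint contribution is the \emph{maximum} rather than the \emph{sum} of the two individual interval lengths. Replacing the max by the sum would loosen the bound and, crucially, would no longer suffice to match the achievable secure sum rate of the no-cooperative-jamming scheme in \eqref{eq:detupext3}; this alignment is precisely the structural feature of the deterministic model that drives the optimality result in Theorem~\ref{thm:capacitydetasym}.
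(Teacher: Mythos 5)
Your proposal is correct and follows essentially the same route as the paper: for $\J_2$ the paper likewise uses that $\sv_{11}(t)$ sits in the interference-free top positions of $\yv_1(t)$ and is thus reconstructible (hence zero conditional entropy), and for $\J_4$ it uses the same chain-rule/single-letterization followed by counting that $\sv_{11}(t),\sv_{12}(t)$ determine all but the $\max\{m_{21}-(m_{11}-m_{12})^+,\,m_{22}-m_{12},\,0\}$ least significant bits of $\yv_2(t)$. Your explicit position-interval bookkeeping (and the observation that the two undetermined intervals share the least-significant end, giving a max rather than a sum) is exactly the content of the paper's Fig.~\ref{fig:ICdety2} argument.
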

\begin{proof}
Begin with  $\J_2$, we have:
\begin{align}
\J_2  & = \Hen(\sv_{11}^{\bln} | \yv^{\bln}_1, w_1)  \nonumber \\
        & = \sum_{t=1}^{\bln}  \Hen(\sv_{11}(t) | \sv_{11}^{t-1}, \yv^{\bln}_1, w_1)  \label{eq:Jb0263det}  \\
        & \leq  \sum_{t=1}^{\bln}  \Hen( \sv_{11}(t) |  \yv_1(t))  \label{eq:Jb9255det}  \\
        &= 0    \label{eq:Jb9015det}
\end{align}
where \eqref{eq:Jb0263det} results from chain rule;
 \eqref{eq:Jb9255det} uses the fact that conditioning reduces  entropy;    
  \eqref{eq:Jb9015det} follows from the fact that  $\sv_{11}(t)$ can be reconstructed from $\yv_1(t)$  (see Fig.~\ref{fig:ICdet}).

Now we focus on the upper bound of $\J_4$:
\begin{align}
 \J_4   
& = \Hen( \yv^{\bln}_2| \sv_{12}^{\bln}, \sv_{11}^{\bln}, w_1)  \nonumber \\
        & = \sum_{t=1}^{\bln}  \Hen(\yv_{2}(t) | \yv_{2}^{t-1}, \sv_{12}^{\bln}, \sv_{11}^{\bln}, w_1)  \label{eq:Jb02631det}  \\
        & \leq  \sum_{t=1}^{\bln}  \Hen(\yv_{2}(t) | \sv_{12}(t), \sv_{11}(t) )  \label{eq:Jb92551det}  \\
        & \leq   \bln \cdot  \max\{  m_{21}- (m_{11 } - m_{12})^+ ,  \  m_{22}- m_{12}, \ 0  \}    \label{eq:Jb0099det}
 \end{align}
where \eqref{eq:Jb02631det} results from chain rule; \eqref{eq:Jb92551det} follows from the fact that conditioning reduces entropy;  
\eqref{eq:Jb0099det} holds true because,  $\sv_{12}(t)$ and $\sv_{11}(t)$ allow for the reconstruction of  all the bits except   $\max\{  m_{21}- (m_{11 } - m_{12})^+ ,  \  m_{22}- m_{12}, \ 0  \}$ least significant  bits of $\yv_{2}(t) $.  As shown in Fig.~\ref{fig:ICdety2}, $\sv_{11}(t)$ represents the top $ (m_{11 } - m_{12})^+$ bits of $\xv_1(t)$, which indicates that $\sv_{11}(t)$ can reconstruct the top $ (m_{11 } - m_{12})^+$ bits of the cross-link signal sent from transmitter~1 to receiver~2. The   $(m_{21}- (m_{11 } - m_{12})^+)^+$ least significant bits of cross-link signal sent from transmitter~1 to receiver~2 could not be recovered from $\sv_{11}(t)$. Similarly, $\sv_{12}(t)$ can reconstruct the top $m_{12}$ bits, but not the  $(m_{22}- m_{12})^+$ least significant bits,  of the direct-link signal sent from transmitter~2 to receiver~2. Therefore, $\sv_{12}(t)$ and $\sv_{11}(t)$ can reconstruct partial bits of $\yv_{2}(t) $, leaving   $\max\{  (m_{21}- (m_{11 } - m_{12})^+)^+ ,  \  (m_{22}- m_{12})^+ \}$ least significant  bits of $\yv_{2}(t) $ unreconstructable.
\end{proof}

\begin{figure}[t!]
\centering
\includegraphics[width=7cm]{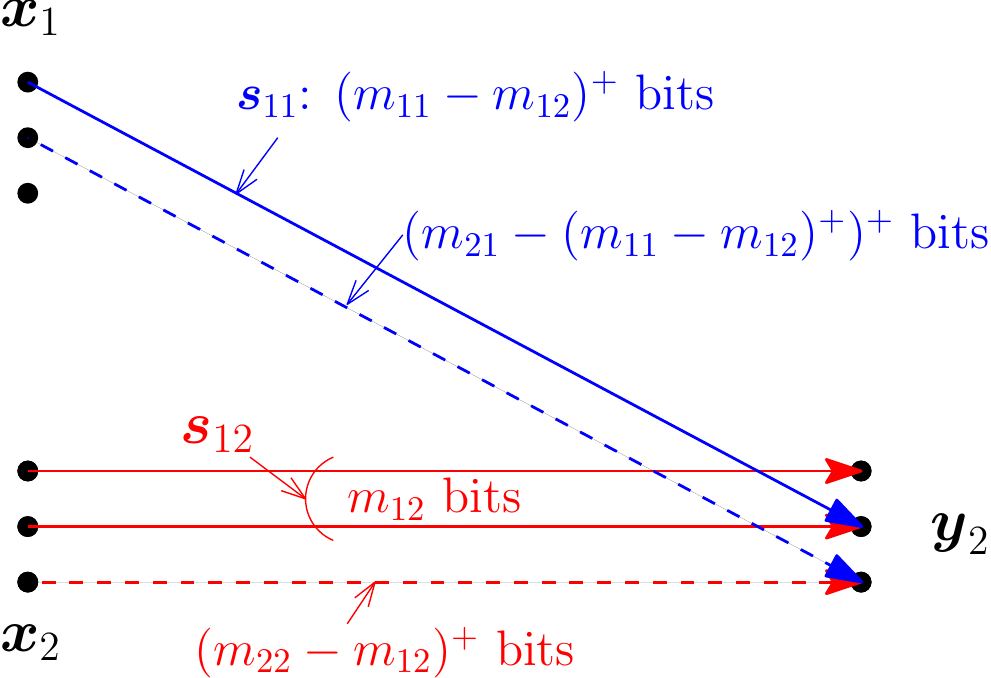}
\caption{Deterministic channel model for receiver~2.  The solid pipe lines are associated with the bits that can be reconstructed from either $\sv_{11}$ or $\sv_{12}$, while the  dash pipe lines are associated with the bits that can \emph{not} be reconstructed from  $\{\sv_{11}, \sv_{12}\}$. Therefore, $\sv_{11}$ and $\sv_{12}$ can reconstruct all the bits  except $\max\{  (m_{21}- (m_{11 } - m_{12})^+)^+ ,  \  (m_{22}- m_{12})^+ \}$ least significant  bits of  $\yv_{2}$.  For  example, when $m_{11}=m_{22} = 3$ and $m_{12}=m_{21} = 2$,  then $\sv_{11}$ and $\sv_{12}$ can reconstruct all the bits  except  one least significant bit of  $\yv_{2}$.} 
\label{fig:ICdety2}
\end{figure}

\section{Converse for the Gaussian channel \label{sec:converse}}

For the Gaussian interference channel defined in Section~\ref{sec:systemGaussian}, we  provide an upper bound on  the secure  sum capacity, which is stated in the following  lemma.

\begin{lemma}  \label{lm:gupper}
For the two-user Gaussian  interference channel defined in Section~\ref{sec:systemGaussian},  the  secure sum capacity is upper bounded by
\begin{align}
C_{\text{sum}}  &\leq   \log (1+ P^{\alpha_{22}-\alpha_{12}} + P^{\alpha_{22} - (\alpha_{11} - \alpha_{12})^+}) +   \log (1+ P^{\alpha_{11}-\alpha_{21}} + P^{\alpha_{11} - (\alpha_{22} - \alpha_{21})^+})   + 4 . \label{eq:gaussianupbound}  
\end{align}
\end{lemma}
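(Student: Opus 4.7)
The plan is to mirror the deterministic converse of Section~\ref{sec:detup3}, replacing the bit-truncation random vectors $\sv_{11},\sv_{12}$ by Gaussian noise-level analogs and the bit-counting estimates of Lemma~\ref{lm:Jboundsdet} by Gaussian maximum-entropy bounds. The Fano's-inequality-plus-secrecy reduction
\[
\bln(R_1+R_2)\le \bigl[h(\yv_2^{\bln}|w_1)-h(\yv_1^{\bln}|w_1)\bigr]+\bigl[h(\yv_1^{\bln}|w_2)-h(\yv_2^{\bln}|w_2)\bigr]+o(\bln)
\]
from \eqref{eq:Fanodet}--\eqref{eq:sumrate1det} uses no structural property of the channel and carries over verbatim, so only the two bracketed entropy differences require fresh estimates.

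For the first difference I would introduce, for $t=1,\dots,\bln$,
\[
s_{11}(t)=\sqrt{P^{(\alpha_{11}-\alpha_{12})^+}}\,e^{j\theta_{11}}x_1(t)+\tilde z_{11}(t),\qquad s_{12}(t)=\sqrt{P^{\alpha_{12}}}\,e^{j\theta_{12}}x_2(t)+\tilde z_{12}(t),
\]
with $\tilde z_{11}(t),\tilde z_{12}(t)\sim\mathcal{C}\mathcal{N}(0,1)$ independent of everything else. Intuitively, $s_{11}$ is the Gaussian counterpart of the top $(m_{11}-m_{12})^+$ ``interference-free'' bits of $\xv_1$ used in the deterministic proof, while $s_{12}$ is the counterpart of the top $m_{12}$ interfering bits at receiver~1. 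With these auxiliaries, the chain-rule bookkeeping \eqref{eq:bd3141det}--\eqref{eq:bd8256det} applies unchanged to give
\[
h(\yv_2^{\bln}|w_1)-h(\yv_1^{\bln}|w_1)=-I(\yv_1^{\bln};\xv_1^{\bln}|s_{11}^{\bln},w_1)-J_1+J_2-J_3+J_4+O(\bln),
\]
the only nontrivial modification being that the identity $\Hen(\av|\bv)=\Hen(\av\oplus\bv|\bv)$ used in \eqref{eq:bd2535det} is replaced by the observation that, conditioned on $\xv_1$, the residual $\yv_1-\sqrt{P^{\alpha_{11}}}e^{j\theta_{11}}x_1$ is a noise-rescaled copy of $s_{12}$ whose differential entropy matches $h(s_{12}^{\bln}|\xv_1^{\bln})$ up to a constant per-sample slack absorbed into the $O(\bln)$.

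The remaining work is a Gaussian analog of Lemma~\ref{lm:Jboundsdet}. Observing that $J_1\ge h(\tilde z_{11}^{\bln})=\bln\log(\pi e)$ and $J_3\ge h(\tilde z_{12}^{\bln})=\bln\log(\pi e)$ (further condition on $\xv_1^{\bln}$ or $\xv_2^{\bln}$ and use conditioning-reduces-entropy), the bound collapses to $J_2+J_4+O(\bln)$. For $J_2=h(s_{11}^{\bln}|\yv_1^{\bln},w_1)$, the linear MMSE reconstruction of $s_{11}$ from $\yv_1$ has constant-bounded error variance since $\yv_1$ carries $\sqrt{P^{\alpha_{11}}}e^{j\theta_{11}}x_1$ above interference-plus-noise of variance $1+P^{\alpha_{12}}$, so $J_2\le\bln\,c$ for a universal $c$. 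For $J_4=h(\yv_2^{\bln}|s_{12}^{\bln},s_{11}^{\bln},w_1)$, the conditional-variance bound
\[
\mathrm{Var}(y_2\mid s_{11},s_{12})\le 1+\frac{P^{\alpha_{21}}}{1+P^{(\alpha_{11}-\alpha_{12})^+}}+\frac{P^{\alpha_{22}}}{1+P^{\alpha_{12}}},
\]
followed by the Gaussian maximum-entropy inequality, matches $\bln$ times the first logarithm of \eqref{eq:gaussianupbound} up to an $O(\bln)$ remainder. The symmetric counterpart supplies the second logarithm, and the absolute constant~$4$ absorbs the aggregated per-sample slacks from the four residual terms.

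The main obstacle is the noise-rescaling substitution that replaces the clean $\F_2^q$ identity \eqref{eq:bd2535det}: it introduces $O(\bln)$ slack from the mismatch between the unit noise in $s_{12}$ and the unit noise in $\yv_1$ after cross-link scaling, and it requires verifying that the residual $I(\yv_1^{\bln};\xv_1^{\bln}|s_{11}^{\bln},w_1)$ is nonnegative so it can simply be dropped from an upper bound. A secondary, bookkeeping obstacle is controlling all four $O(1)$ per-sample constants tightly enough that the final additive gap is indeed at most~$4$.
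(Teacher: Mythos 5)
Your proposal is correct in substance and follows essentially the same route as the paper: the same auxiliaries $s_{11},s_{12}$, the same chain-rule decomposition into $J_1,\dots,J_4$ plus $-\Imu(y^{\bln}_1;x_1^{\bln}|s_{11}^{\bln},w_1)$, and the same Gaussian maximum-entropy estimates, with the constant $4$ coming from $J_2\le \bln\log(4\pi e)$ against the two $\bln\log(\pi e)$ lower bounds on $J_1,J_3$. The ``main obstacle'' you flag is not actually there: the paper defines $s_{12}(t)=\sqrt{P^{\alpha_{12}}}e^{j\theta_{12}}x_2(t)+z_1(t)$ with the \emph{actual} receiver-1 noise, so $\hen(s_{12}^{\bln}|s_{11}^{\bln},w_1)=\hen(y^{\bln}_1|x_1^{\bln},s_{11}^{\bln},w_1)$ is an exact identity (independence plus $\hen(a|b)=\hen(a+b|b)$, cf.~\eqref{eq:bd8325}--\eqref{eq:bd7420}); even with your independent copy $\tilde z_{12}$ it remains exact, with zero slack, because $(x_2^{\bln},\tilde z_{12}^{\bln})$ and $(x_2^{\bln},z_1^{\bln})$ are identically distributed and independent of $(x_1^{\bln},s_{11}^{\bln},w_1)$, and the dropped term $\Imu(y^{\bln}_1;x_1^{\bln}|s_{11}^{\bln},w_1)$ is nonnegative simply because it is a mutual information. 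One caveat: your justification of $J_2\le \bln c$ via the direct signal sitting ``above'' interference-plus-noise breaks down when $\alpha_{11}<\alpha_{12}$; the paper's device of subtracting $\sqrt{P^{-\alpha_{12}}}\,y_1(t)$ (or simply noting that $s_{11}$ then has $O(1)$ power) covers all cases and pins the per-letter constant at $\log(4\pi e)$, which is what delivers the stated additive gap of $4$.
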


 \vspace{5pt}

\begin{remark} [Converse proof of Theorem~\ref{thm:GaussianNCJ}] \label{rk:gauas}
The converse of Theorem~\ref{thm:GaussianNCJ} follows  from the upper bound in \eqref{eq:gaussianupbound}.  
Let us consider the two conditions in \eqref{eq:capGaussian1} and  \eqref{eq:capGaussian2}, i.e., $\alpha_{22} + (\alpha_{11 } - \alpha_{12})^+  \geq   \alpha_{21}+  \alpha_{12}$ and $ \alpha_{11} + (\alpha_{22 } - \alpha_{21})^+  \geq   \alpha_{21}+  \alpha_{12}  $. 
 If these two conditions are satisfied,  then we can further bound the secure sum capacity by  
\begin{align}
C_{\text{sum}}  &\leq   \log (1+ 2 P^{\alpha_{22}-\alpha_{12}}) +   \log (1+ 2P^{\alpha_{11}-\alpha_{21}})  + 4     \label{eq:gaussianupbound22} 
\end{align}
where \eqref{eq:gaussianupbound22} results from bound \eqref{eq:gaussianupbound}, as well as the facts that  $P^{\alpha_{22}-\alpha_{12}} \geq P^{\alpha_{22} - (\alpha_{11} - \alpha_{12})^+}$ and that $P^{\alpha_{11}-\alpha_{21}} \geq P^{\alpha_{11} - (\alpha_{22} - \alpha_{21})^+}$  under the two conditions in \eqref{eq:capGaussian1} and  \eqref{eq:capGaussian2}.
On the other hand, GWC-TIN scheme described in Section~\ref{sec:noCJGau} achieves a secure sum rate of 
\begin{align}
R_1 + R_2 = \log \bigl(  1+    \frac{P^{\alpha_{11} - \alpha_{21} }}{2}  \bigr)  +  \log \bigl(  1+     \frac{P^{\alpha_{22} -  \alpha_{12} }}{ 2 }\bigr)  -  2   \label{eq:GWCTINsum} 
\end{align}
(see \eqref{eq:Rk111} and \eqref{eq:Rk432}).  
One can easily check that the gap between  the secure sum capacity lower bound in \eqref{eq:GWCTINsum} and upper bound in \eqref{eq:gaussianupbound22}  is no more than $10$ bits/channel use. Note that, the gap can be further reduced by optimizing the computations in converse and achievability.  
\end{remark}

 \vspace{9pt}

Let us now prove Lemma~\ref{lm:gupper}.  The proof follows closely from that derived for the deterministic case (see~Section~\ref{sec:detup3}). 
At first we define that 
\begin{align}
  s_{11}(t)  & \defeq   \sqrt{P^{(\alpha_{11}-\alpha_{12})^+}} e^{j\theta_{11}} x_{1}(t) +  \tilde{z}_{1}(t)    \label{eq:defs11} \\
 s_{12}(t)  & \defeq   \sqrt{P^{\alpha_{12}}} e^{j\theta_{12}} x_{2}(t)  +z_{1}(t),  \label{eq:defs12}
\end{align}
where  $\tilde{z}_{1}(t) \sim \mathcal{C}\mathcal{N}(0, 1)$ is a virtual noise that is independent of the other noise and transmitted signals.
By following the steps of \eqref{eq:Fanodet} - \eqref{eq:sumrate1det} derived in Section~\ref{sec:detup3}, we can bound the sum rate of this Gaussian channel  as
\begin{align}
  \bln R_1 + \bln R_2 -\bln \epsilon_{1,n} -\bln \epsilon_{2,n} -2 \bln \epsilon    
    \leq & \  \hen(y^{\bln}_2| w_1)  - \hen(y^{\bln}_1| w_1)   +  \hen(y^{\bln}_1| w_2)    - \hen(y^{\bln}_2| w_2).   \label{eq:sumrate1} 
 \end{align}
Then, following the steps of \eqref{eq:bd3141det} - \eqref{eq:bd8256det} gives:
\begin{align}
  &  \hen(y^{\bln}_2| w_1)  - \hen(y^{\bln}_1| w_1)       \nonumber \\
  &=  \hen(s_{11}^{\bln}, y^{\bln}_2| w_1)  -   \underbrace{\hen(s_{11}^{\bln}| y^{\bln}_2, w_1)}_{\defeq J_1}   - \hen(s_{11}^{\bln}, y^{\bln}_1| w_1)    + \underbrace{\hen(s_{11}^{\bln} | y^{\bln}_1, w_1) }_{\defeq J_2}  \label{eq:bd3141}   \\ 
 &=    \hen( y^{\bln}_2| s_{11}^{\bln}, w_1) - \hen( y^{\bln}_1|s_{11}^{\bln},  w_1)   - J_1  + J_2      \nonumber   \\
& =   \hen(s_{12}^{\bln},  y^{\bln}_2| s_{11}^{\bln}, w_1) -  \underbrace{\hen(s_{12}^{\bln} | y^{\bln}_2,  s_{11}^{\bln}, w_1)}_{\defeq J_3}    - \hen( y^{\bln}_1|s_{11}^{\bln},  w_1)   - J_1  + J_2 \non \\
 &=    \hen(s_{12}^{\bln}| s_{11}^{\bln}, w_1)  +   \underbrace{\hen( y^{\bln}_2| s_{12}^{\bln}, s_{11}^{\bln}, w_1)}_{\defeq J_4}   - \hen( y^{\bln}_1|s_{11}^{\bln},  w_1)   - J_1  + J_2 -J_3 \label{eq:bd8225}      \\
  &=   \hen(y^{\bln}_1| x_1^{\bln}, s_{11}^{\bln}, w_1)     -  \hen( y^{\bln}_1|s_{11}^{\bln},  w_1)    -  J_1   +  J_2  - J_3  + J_4 \label{eq:bd2535}      \\ 
 & =   -  \Imu (y^{\bln}_1;  x_1^{\bln}| s_{11}^{\bln}, w_1)  - J_1  + J_2 -J_3 +J_4 \label{eq:bd992211}  
 \end{align}
where  $J_1\defeq \hen(s_{11}^{\bln}| y^{\bln}_2, w_1)$, $ J_2 \defeq \hen(s_{11}^{\bln} | y^{\bln}_1, w_1) $, $J_3 \defeq \hen(s_{12}^{\bln} | y^{\bln}_2, s_{11}^{\bln}, w_1)$ and $J_4\defeq \hen( y^{\bln}_2| s_{12}^{\bln}, s_{11}^{\bln}, w_1)$;
  the steps from \eqref{eq:bd3141} to \eqref{eq:bd8225}   follow from   chain rule;
  \eqref{eq:bd2535}  stems from that 
\begin{align}
    \hen(s_{12}^{\bln}| s_{11}^{\bln}, w_1)     
 = &  \hen(s_{12}^{\bln})   \label{eq:bd8325}  \\
 = &  \hen(s_{12}^{\bln}| x^{\bln}_1, s_{11}^{\bln}, w_1)   \label{eq:bd1794}      \\
 = &  \hen( \{ \sqrt{P^{\alpha_{12}}} e^{j\theta_{12}} x_{2}(t)  +z_{1}(t)  \}_{t=1}^{\bln} | x^{\bln}_1, s_{11}^{\bln}, w_1)  \nonumber   \\
 = &  \hen( \{    \sqrt{P^{\alpha_{11}}} e^{j\theta_{11}} x_{1}(t)  + \sqrt{P^{\alpha_{12}}} e^{j\theta_{12}} x_{2}(t)  +z_{1}(t)  \}_{t=1}^{\bln} | x^{\bln}_1, s_{11}^{\bln}, w_1)   \label{eq:bd7420}       \\
 = &  \hen(y^{\bln}_1| x_1^{\bln}, s_{11}^{\bln}, w_1)    \nonumber
 \end{align}
where \eqref{eq:bd8325} and \eqref{eq:bd1794} use the fact that $s_{12}^{\bln}$ is independent of $s_{11}^{\bln}, w_1$ and $ x^{\bln}_1$;
\eqref{eq:bd7420} follows from that $\hen(a|b)=\hen(a+b |b)$ for any continuous random variables $a$ and $b$.    
Going back to \eqref{eq:bd992211}, we further have 
\begin{align}
   \hen(y^{\bln}_2| w_1)  - \hen(y^{\bln}_1| w_1)  & =  -  \Imu (y^{\bln}_1;  x_1^{\bln}| s_{11}^{\bln}, w_1)  - J_1  + J_2 -J_3 +J_4     \label{eq:bd887733} \\
   & \leq  - J_1  + J_2 -J_3 +J_4     \label{eq:bd33405} \\
   &\leq    n\log (1+ P^{\alpha_{22}-\alpha_{12}} + P^{\alpha_{21} - (\alpha_{11} - \alpha_{12})^+}) + n\log 4  \label{eq:bd92374}  
 \end{align}
 where \eqref{eq:bd887733} is from \eqref{eq:bd992211};  
 \eqref{eq:bd33405}   stems from  the nonnegativity of mutual information; 
 \eqref{eq:bd92374} follows from Lemma~\ref{lm:Jbounds} (see below).
Similarly, by interchanging the roles of user~1 and user~2, we also have 
\begin{align}
   \hen(y^{\bln}_1| w_2)  - \hen(y^{\bln}_2| w_2) 
   \leq      n\log (1+ P^{\alpha_{11}-\alpha_{21}} + P^{\alpha_{12} - (\alpha_{22} - \alpha_{21})^+}) + n\log 4  \label{eq:bd9256}.     
 \end{align}
  Finally, combining \eqref{eq:sumrate1}, \eqref{eq:bd92374} and \eqref{eq:bd9256} gives the following bound on the sum rate
 \begin{align}
 R_1 + R_2   
 &\leq   \log (1+ P^{\alpha_{22}-\alpha_{12}} + P^{\alpha_{22} - (\alpha_{11} - \alpha_{12})^+}) + 2     \non \\
 &\quad +   \log (1+ P^{\alpha_{11}-\alpha_{21}} + P^{\alpha_{11} - (\alpha_{22} - \alpha_{21})^+}) + 2  + \epsilon_{1,n} + \epsilon_{2,n} + 2  \epsilon .
\end{align}
 By setting  $n\to \infty, \  \epsilon_{1,n} \to 0, \ \epsilon_{2,n} \to 0$ and  $\epsilon \to 0$, we  get the desired bound \eqref{eq:gaussianupbound}. 
The lemma used in our proof is provided below.

 \vspace{10pt}
\begin{lemma}  \label{lm:Jbounds}
For $J_1 = \hen(s_{11}^{\bln}| y^{\bln}_2, w_1)$, $ J_2 = \hen(s_{11}^{\bln} | y^{\bln}_1, w_1) $, $J_3 = \hen(s_{12}^{\bln} | y^{\bln}_2, s_{11}^{\bln}, w_1)$ and $J_4= \hen( y^{\bln}_2| s_{12}^{\bln}, s_{11}^{\bln}, w_1)$,  we have
\begin{align}
J_1  & \geq  n\log (\pi e)    \label{eq:Jb1} \\
J_3  & \geq  n\log (\pi e)   \label{eq:Jb3}  \\
J_2  & \leq  n\log (4\pi e)    \label{eq:Jb2}  \\
J_4  & \leq  n\log (\pi e(1+ P^{\alpha_{22}-\alpha_{12}} + P^{\alpha_{21} -(\alpha_{11}- \alpha_{12})^+})).    \label{eq:Jb4}  
\end{align}
\end{lemma}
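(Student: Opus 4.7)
The plan is to handle the lemma in two natural halves: the lower bounds on $J_1, J_3$ come from the fact that the auxiliary noises $\tilde{z}_1$ and $z_1$ inside $s_{11}$ and $s_{12}$ are independent of everything else in the conditioning, and the upper bounds on $J_2, J_4$ come from subtracting suitably scaled linear combinations of the conditioning variables to isolate a bounded-variance residual, followed by the Gaussian maximum-entropy bound.

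For the lower bounds, I would argue by first adding extra conditioning, which only decreases entropy. Since $s_{11}(t) = \sqrt{P^{(\alpha_{11}-\alpha_{12})^+}} e^{j\theta_{11}} x_{1}(t) + \tilde{z}_{1}(t)$ and $\tilde{z}_{1}$ is an independent $\mathcal{CN}(0,1)$ sequence, independent of the transmitted signals, messages, noises, and thus independent of $(y_2^{\bln}, w_1, x_1^{\bln})$, I would write
\begin{align*}
J_1 \; \geq \; \hen(s_{11}^{\bln}\mid y_2^{\bln}, w_1, x_1^{\bln}) \; = \; \hen(\tilde{z}_1^{\bln}\mid y_2^{\bln}, w_1, x_1^{\bln}) \; = \; \hen(\tilde z_1^{\bln}) \; = \; \bln\log(\pi e).
\end{align*}
The same argument with $z_1$ in place of $\tilde z_1$ and $x_2^{\bln}$ in place of $x_1^{\bln}$ gives $J_3 \geq \bln \log(\pi e)$.

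For the upper bounds, I would first use the chain rule and the fact that conditioning reduces entropy to get $J_2 \leq \sum_t \hen(s_{11}(t)\mid y_1(t))$ and $J_4 \leq \sum_t \hen(y_2(t)\mid s_{11}(t), s_{12}(t))$. For $J_2$, in the regime $\alpha_{11}\ge\alpha_{12}$ I would note that $s_{11}(t) - y_1(t)/\sqrt{P^{\alpha_{12}}}$ equals $-e^{j\theta_{12}}x_2(t) - z_1(t)/\sqrt{P^{\alpha_{12}}} + \tilde z_1(t)$, whose variance is at most $3$ using $P\ge 1$ and $\E|x_\ell|^2\le 1$; in the complementary regime $\alpha_{11}<\alpha_{12}$, $s_{11}(t)$ itself has variance at most $2$. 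The Gaussian maximum-entropy bound then yields $\hen(s_{11}(t)\mid y_1(t))\le \log(4\pi e)$ in both cases. For $J_4$, I would pick scalars $a = \sqrt{P^{\alpha_{21}-(\alpha_{11}-\alpha_{12})^+}}\,e^{j(\theta_{21}-\theta_{11})}$ and $b=\sqrt{P^{\alpha_{22}-\alpha_{12}}}\,e^{j(\theta_{22}-\theta_{12})}$ so that the signal contributions of $x_1(t)$ and $x_2(t)$ in $y_2(t) - a\,s_{11}(t) - b\,s_{12}(t)$ cancel exactly, leaving $z_2(t) - a\,\tilde z_1(t) - b\,z_1(t)$, whose variance is $1 + P^{\alpha_{22}-\alpha_{12}} + P^{\alpha_{21}-(\alpha_{11}-\alpha_{12})^+}$. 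Conditioning reduces entropy and the Gaussian bound then give the stated inequality.

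The main obstacle is not computational but rather picking the right linear combinations for the upper bounds and checking the case split $\alpha_{11}\gtrless \alpha_{12}$ correctly, especially that the algebraic cancellation for $J_4$ uses the same exponent $(\alpha_{11}-\alpha_{12})^+$ that appears in the statement and not, e.g., $\alpha_{11}-\alpha_{12}$, so that $a$ remains well-defined as a finite scalar in both regimes. A minor subtlety is that when $\alpha_{21} < (\alpha_{11}-\alpha_{12})^+$ the coefficient $|a|^2 = P^{\alpha_{21}-(\alpha_{11}-\alpha_{12})^+}$ is less than one, so the noise variance bound still holds and the stated bound remains valid. Apart from this, the calculations are entirely routine and only rely on independence of the auxiliary Gaussians and the scalar maximum-entropy lemma for complex random variables.
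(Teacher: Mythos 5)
Your proposal is correct and follows essentially the same route as the paper: lower-bound $J_1,J_3$ by additionally conditioning on $x_1^{\bln}$ (resp.\ $x_2^{\bln}$) and invoking independence of the auxiliary noises, and upper-bound $J_2,J_4$ via chain rule, single-letter conditioning, subtraction of a scaled linear combination that cancels the signal terms, and the Gaussian maximum-entropy bound. The only cosmetic difference is your case split $\alpha_{11}\gtrless\alpha_{12}$ for $J_2$, where the paper instead subtracts $\sqrt{P^{-\alpha_{12}}}\,y_1(t)$ uniformly and bounds the leftover coefficient $\bigl(\sqrt{P^{(\alpha_{11}-\alpha_{12})^+}}-\sqrt{P^{\alpha_{11}-\alpha_{12}}}\bigr)^2\le 1$, yielding the same $n\log(4\pi e)$ bound.
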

\begin{proof}
See Appendix~\ref{app:Jbounds}.
\end{proof}

\section{Conclusion and discussion}   \label{sec:conclusion}

This work showed that a simple scheme without cooperative jamming (i.e., GWC-TIN scheme) can indeed achieve  the secure sum capacity to within a constant gap in a  regime  (see~\eqref{eq:capGaussian1} and \eqref{eq:capGaussian2}) of the two-user Gaussian interference channel. 
In this GWC-TIN scheme,  each transmitter uses a Gaussian wiretap codebook, while each receiver treats interference as noise when decoding the desired message. 
In this simple scheme,  the transmitters do not need to know  the information of the \emph{channel phases}. 
For the deterministic interference channel model, this work identified a regime (see~\eqref{eq:capdetcond1} and \eqref{eq:capdetcond2}) in which  a simple scheme without using cooperative jamming  (i.e., WoCJ scheme) is optimal in terms of secure sum capacity.  For the symmetric case of the deterministic model, the identified regime (i.e., $0\leq \alpha \leq 2/3$) is indeed a sufficient and necessary regime for WoCJ scheme to be  optimal. 
In the future work, it would be interesting to determine whether the conditions in \eqref{eq:capdetcond1} and \eqref{eq:capdetcond2} are  also necessary for WoCJ scheme to be optimal, for the general deterministic interference channel.  For the general Gaussian interference channel, again, it would be interesting to determine whether the conditions \eqref{eq:capGaussian1} and \eqref{eq:capGaussian2} are  also necessary for GWC-TIN scheme to be optimal in terms of secure sum capacity to within a constant gap.
Another direction of the future  work is to study on the optimality of  the secure communication without using cooperative jamming in some other channels, such as $k$-user interference channel and wiretap channel with helper.

\appendices

\section{Proof of Lemma~\ref{lm:Jbounds}} \label{app:Jbounds}

Remind that  $J_1= \hen(s_{11}^{\bln}| y^{\bln}_2, w_1)$, $ J_2 = \hen(s_{11}^{\bln} | y^{\bln}_1, w_1) $, $J_3 = \hen(s_{12}^{\bln} | y^{\bln}_2, s_{11}^{\bln}, w_1)$, $J_4= \hen( y^{\bln}_2| s_{12}^{\bln}, s_{11}^{\bln}, w_1)$,  $s_{11}(t)  =  \sqrt{P^{(\alpha_{11}-\alpha_{12})^+}} e^{j\theta_{11}} x_{1}(t) +  \tilde{z}_{1}(t)$, and $ s_{12}(t) =  \sqrt{P^{\alpha_{12}}} e^{j\theta_{12}} x_{2}(t)  +z_{1}(t)$.  
At first we focus on the lower bound of $J_1$:
\begin{align}
J_1  & = \hen(s_{11}^{\bln}| y^{\bln}_2, w_1)  \nonumber \\
       & \geq \hen(s_{11}^{\bln}|  x_1^{\bln}, y^{\bln}_2, w_1)     \label{eq:Jb4256}  \\
      & = \hen(   \{   \sqrt{P^{(\alpha_{11}-\alpha_{12})^+}} e^{j\theta_{11}} x_{1}(t) +  \tilde{z}_{1}(t) \}_{t=1}^{\bln}  |  x_1^{\bln}, y^{\bln}_2, w_1)     \nonumber  \\
       & = \hen(   \{ \tilde{z}_{1}(t) \}_{t=1}^{\bln}  |  x_1^{\bln}, y^{\bln}_2, w_1)     \label{eq:Jb8488}  \\
       & = \hen(   \{ \tilde{z}_{1}(t) \}_{t=1}^{\bln} )    \nonumber  \\
       & =  n\log (\pi e) \nonumber 
\end{align}
where \eqref{eq:Jb4256} follows from the fact that conditioning reduces differential entropy;   \eqref{eq:Jb8488} follows from the fact that $\hen(a|b)=\hen(a-b |b)$ for any continuous random variables $a$ and $b$; the last equality holds true because $\hen( \tilde{z}_{1}(t)) = \log (\pi e) $.   Similarly, we have 
\begin{align}
J_3  & = \hen(s_{12}^{\bln} | y^{\bln}_2, s_{11}^{\bln}, w_1)  \nonumber \\
       & \geq  \hen(s_{12}^{\bln} |  x_2^{\bln}, y^{\bln}_2, s_{11}^{\bln}, w_1)    \nonumber \\
       & = \hen(   \{ z_{1}(t) \}_{t=1}^{\bln} )    \nonumber  \\
       & =  n\log (\pi e). \nonumber 
\end{align}

Now we focus on the  upper bound of $J_2$:
\begin{align}
J_2  & = \hen(s_{11}^{\bln} | y^{\bln}_1, w_1)  \nonumber \\
        & = \sum_{t=1}^{\bln}  \hen(s_{11}(t) | s_{11}^{t-1}, y^{\bln}_1, w_1)  \label{eq:Jb0263}  \\
        & \leq  \sum_{t=1}^{\bln}  \hen(s_{11}(t) |  y_1(t))  \label{eq:Jb9255}  \\
        & =  \sum_{t=1}^{\bln}  \hen\bigl(s_{11}(t) -  \sqrt{P^{-\alpha_{12} }}y_1(t) |  y_1(t)\bigr)  \label{eq:Jb2595}  \\
        & =  \sum_{t=1}^{\bln}  \hen \Bigl(  \tilde{z}_{1}(t)  + (  \sqrt{P^{(\alpha_{11}-\alpha_{12})^+}} - \sqrt{P^{\alpha_{11}-\alpha_{12} }}  )  e^{j\theta_{11}} x_{1}(t)  -    e^{j\theta_{12}} x_{2}(t)  - \sqrt{P^{-\alpha_{12} }} z_{1}(t)    \big|  y_1(t) \Bigr)  \nonumber  \\
        & \leq  \sum_{t=1}^{\bln}  \hen \Bigl(    \tilde{z}_{1}(t)  + (  \sqrt{P^{(\alpha_{11}-\alpha_{12})^+}} - \sqrt{P^{\alpha_{11}-\alpha_{12} }}  )  e^{j\theta_{11}} x_{1}(t)  -    e^{j\theta_{12}} x_{2}(t)  - \sqrt{P^{-\alpha_{12} }} z_{1}(t)     \Bigr)   \label{eq:Jb5367}  \\
       & \leq   n \log (\pi e(2+  \underbrace{(  \sqrt{P^{(\alpha_{11}-\alpha_{12})^+}} - \sqrt{P^{\alpha_{11}-\alpha_{12} }}  )^2 }_{\leq 1}+    \underbrace{P^{-\alpha_{12} } }_{\leq 1} ))  \label{eq:Jb44112} \\
        & \leq   n \log (4\pi e)    \label{eq:Jb00998} 
\end{align}
where \eqref{eq:Jb0263} results from chain rule;
 \eqref{eq:Jb9255}  and \eqref{eq:Jb5367} follow from the fact that conditioning reduces differential entropy;
  \eqref{eq:Jb2595} uses the fact that $\hen(a|b)=\hen(a-\beta b |b)$ for a constant $\beta$; 
  \eqref{eq:Jb44112}  follows from  the fact that  $ \hen \bigl(   \tilde{z}_{1}(t)  +  \beta_0  e^{j\theta_{11}} x_{1}(t)  - \beta_1  e^{j\theta_{12}} x_{2}(t)  -  \beta_2 z_{1}(t)   \bigr)  \leq  \log (\pi e(1+ \beta_0^2+ \beta_1 ^2+   \beta_2^2  ))$ for constants $\beta_0$, $\beta_1$ and $\beta_2$;
  \eqref{eq:Jb00998}  uses the identities that $0\leq \sqrt{P^{(\alpha_{11}-\alpha_{12})^+}} - \sqrt{P^{\alpha_{11}-\alpha_{12} }}   \leq  1$ and that  $ P^{-\alpha_{12} } \leq 1 $. Remind that  $P\geq 1$, $\alpha_{k\ell} \geq 0, \forall k, \ell  \in \{1,2\}$, and $(\bullet)^+= \max\{0, \bullet\}$.

Similarly, we have the following upper bound on $J_4$:
\begin{align}
J_4  & = \hen( y^{\bln}_2| s_{12}^{\bln}, s_{11}^{\bln}, w_1)  \nonumber \\
        & = \sum_{t=1}^{\bln}  \hen(y_{2}(t) | y_{2}^{t-1}, s_{12}^{\bln}, s_{11}^{\bln}, w_1)  \label{eq:Jb02631}  \\
        & \leq  \sum_{t=1}^{\bln}  \hen(y_{2}(t) | s_{12}(t), s_{11}(t) )  \label{eq:Jb92551}  \\
        & =  \sum_{t=1}^{\bln}  \hen\bigl(y_{2}(t) -  \sqrt{P^{\alpha_{22}-\alpha_{12}}}  e^{j(\theta_{22}- \theta_{12})} s_{12}(t)  -  \sqrt{P^{\alpha_{21} -(\alpha_{11} - \alpha_{12} )^+}}  e^{j(\theta_{21}- \theta_{11})} s_{11}(t)   | s_{12}(t), s_{11}(t) \bigr)  \label{eq:Jb25951}  \\
        & =  \sum_{t=1}^{\bln}  \hen \Bigl( z_{2}(t)   -  \sqrt{P^{\alpha_{22}-\alpha_{12}}}  e^{j(\theta_{22}- \theta_{12})}  z_{1}(t)  -  \sqrt{P^{\alpha_{21} -(\alpha_{11} - \alpha_{12} )^+}}  e^{j(\theta_{21}- \theta_{11})}  \tilde{z}_{1}(t)       \big| s_{12}(t), s_{11}(t)  \Bigr)  \nonumber  \\
        & \leq  \sum_{t=1}^{\bln}   \hen \Bigl( z_{2}(t)   -  \sqrt{P^{\alpha_{22}-\alpha_{12}}}  e^{j(\theta_{22}- \theta_{12})}  z_{1}(t)  -  \sqrt{P^{\alpha_{21} -(\alpha_{11} - \alpha_{12} )^+}}  e^{j(\theta_{21}- \theta_{11})}  \tilde{z}_{1}(t)   \Bigr)   \label{eq:Jb53671}  \\
       & \leq    n\log (\pi e(1+ P^{\alpha_{22}-\alpha_{12}}  +  P^{\alpha_{21} -(\alpha_{11} - \alpha_{12} )^+} ))  \nonumber 
\end{align}
where \eqref{eq:Jb02631} results from chain rule;
 \eqref{eq:Jb92551}  and \eqref{eq:Jb53671} follow from the fact that conditioning reduces differential entropy;
     \eqref{eq:Jb25951} uses the fact that $\hen(a|b,c)=\hen(a-\beta_1b -\beta_2 c |b, c)$ for  constants $\beta_1$ and  $\beta_2$;
 the last inequality  stems from the fact that $\hen \bigl( z_{2}(t)   -  \beta_3  e^{j(\theta_{22}- \theta_{12})} z_{1}(t)    -  \beta_4  e^{j(\theta_{21}- \theta_{11})} \tilde{z}_{1}(t)     \bigr)   \leq  \log (\pi e(1+ \beta_3^2 + \beta_4^2))$ for  constants $\beta_3$ and  $\beta_4$. At this point we complete the proof.

 \section{Proof of \eqref{eq:detup1}, \eqref{eq:detup2}, \eqref{eq:detup4} and \eqref{eq:detup5} in Lemma~\ref{lm:detconverse}   \label{sec:conversesome}}

\subsection{Proof of bounds \eqref{eq:detup1} and \eqref{eq:detup2} \label{sec:detup12} }

Let us at first prove bound \eqref{eq:detup1}. Note that  bound \eqref{eq:detup2} can be proved in a similar way. 
Beginning with Fano's inequality, we can bound the rate of receiver~1 as:
\begin{align}
  \bln R_1   
  &\leq  \Imu(w_1; \yv^{\bln}_1) + \bln \epsilon_{1,n}  \non \\
  &\leq  \Imu(w_1; \xv^{\bln}_{1,1: m_{11}}) + \bln \epsilon_{1,n}    \label{eq:2remove}  \\
    &\leq  \Imu(w_1; \xv^{\bln}_{1,1: m_{11}}, \xv^{\bln}_{2,1: (\min\{m_{11}, m_{21}\}- ( m_{21} - m_{22} )^+)^+}, \yv^{\bln}_2) + \bln \epsilon_{1,n}    \label{eq:2genie}  \\
    &\leq   \Imu(w_1; \xv^{\bln}_{1,1: m_{11}}, \xv^{\bln}_{2,1: (\min\{m_{11}, m_{21}\}- ( m_{21} - m_{22} )^+)^+}, \yv^{\bln}_2) - \Imu(w_1; \yv^{\bln}_2)  +  \bln \epsilon  + \bln \epsilon_{1,n}  \label{eq:2R1sec88}    \\
&=  \Imu(w_1; \xv^{\bln}_{1,1: m_{11}}, \xv^{\bln}_{2,1: (\min\{m_{11}, m_{21}\}- ( m_{21} - m_{22} )^+)^+} |  \yv^{\bln}_2)    +  \bln \epsilon  + \bln \epsilon_{1,n}   \non \\ 
&=  \underbrace{ \Imu(w_1;  \xv^{\bln}_{2,1: (\min\{m_{11}, m_{21}\}- ( m_{21} - m_{22} )^+)^+} |  \yv^{\bln}_2)}_{\leq \bln \cdot  (\min\{m_{11}, m_{21}\}- ( m_{21} - m_{22} )^+)^+}  + \underbrace{ \Imu(w_1;   \xv^{\bln}_{1,1: m_{11}} |  \yv^{\bln}_2,   \xv^{\bln}_{2,1: (\min\{m_{11}, m_{21}\}- ( m_{21} - m_{22} )^+)^+} )}_{
 \leq   \bln \cdot  (   m_{11}  - \min\{m_{11}, m_{21}\}  ) }    \non\\&\quad +  \bln \epsilon  + \bln \epsilon_{1,n}   \non\\
& \leq \bln \cdot(\min\{m_{11}, m_{21}\}- ( m_{21} - m_{22} )^+)^+  +  \bln \cdot  (   m_{11}  - \min\{m_{11}, m_{21}\}) + \bln \epsilon  + \bln \epsilon_{1,n}   \label{eq:2R1s224}   \\
& =   \bln \cdot  (   m_{11}  - \min\{m_{11}, m_{21}, ( m_{21} - m_{22} )^+ \} ) + \bln \epsilon  + \bln \epsilon_{1,n}    \label{eq:2R1s73665}   \\
& =   \bln \cdot  \max\{ 0, \   m_{11}  -  ( m_{21} - m_{22} )^+  \} + \bln \epsilon  + \bln \epsilon_{1,n}   \non
\end{align}
where   $\lim_{n\to\infty} \epsilon_{1,n} = 0$;
\eqref{eq:2remove} stems from the Markov chain  of $w_1 \to \xv^{\bln}_{1,1: m_{11}} \to  \yv^{\bln}_1$;  
\eqref{eq:2genie} results from the fact that adding information does not decrease the mutual information;
 \eqref{eq:2R1sec88}  results  from  the secrecy constraint, i.e., $ \Imu(w_1; \yv^{\bln}_2)  \leq \bln \epsilon$ for an  arbitrary small  $\epsilon$ (cf.~\eqref{eq:defsecrecy1});
\eqref{eq:2R1s224}  follows from  the fact that $\Imu(w_1;   \xv^{\bln}_{2,1: (\min\{m_{11}, m_{21}\}- ( m_{21} - m_{22} )^+)^+} |  \yv^{\bln}_2) $  $  \leq \bln \cdot (\min\{m_{11}, m_{21}\}- ( m_{21} - m_{22} )^+)^+$,  and  the fact that $\xv^{\bln}_{1,1: \min\{m_{11}, m_{21}\}}$ can be reconstructed from $\{ \yv^{\bln}_2,   \xv^{\bln}_{2,1: (\min\{m_{11}, m_{21}\}- ( m_{21} - m_{22} )^+)^+} \}$  (see Fig.~\ref{fig:ICdety2a32});
\eqref{eq:2R1s73665} follows from the identity that $(a_1 - a_2)^+ +a_3 - a_1  = a_3 -\min\{a_1, a_2\}$ for any real numbers $a_1, a_2$ and $a_3$.
Letting  $n\to \infty, \  \epsilon_{1,n} \to 0 $ and  $\epsilon \to 0$, it gives bound \eqref{eq:detup1}.  By interchanging the roles of user~1 and user~2, bound \eqref{eq:detup2} can be proved in a similar way.

\begin{figure}[t!]
\centering
\includegraphics[width=11cm]{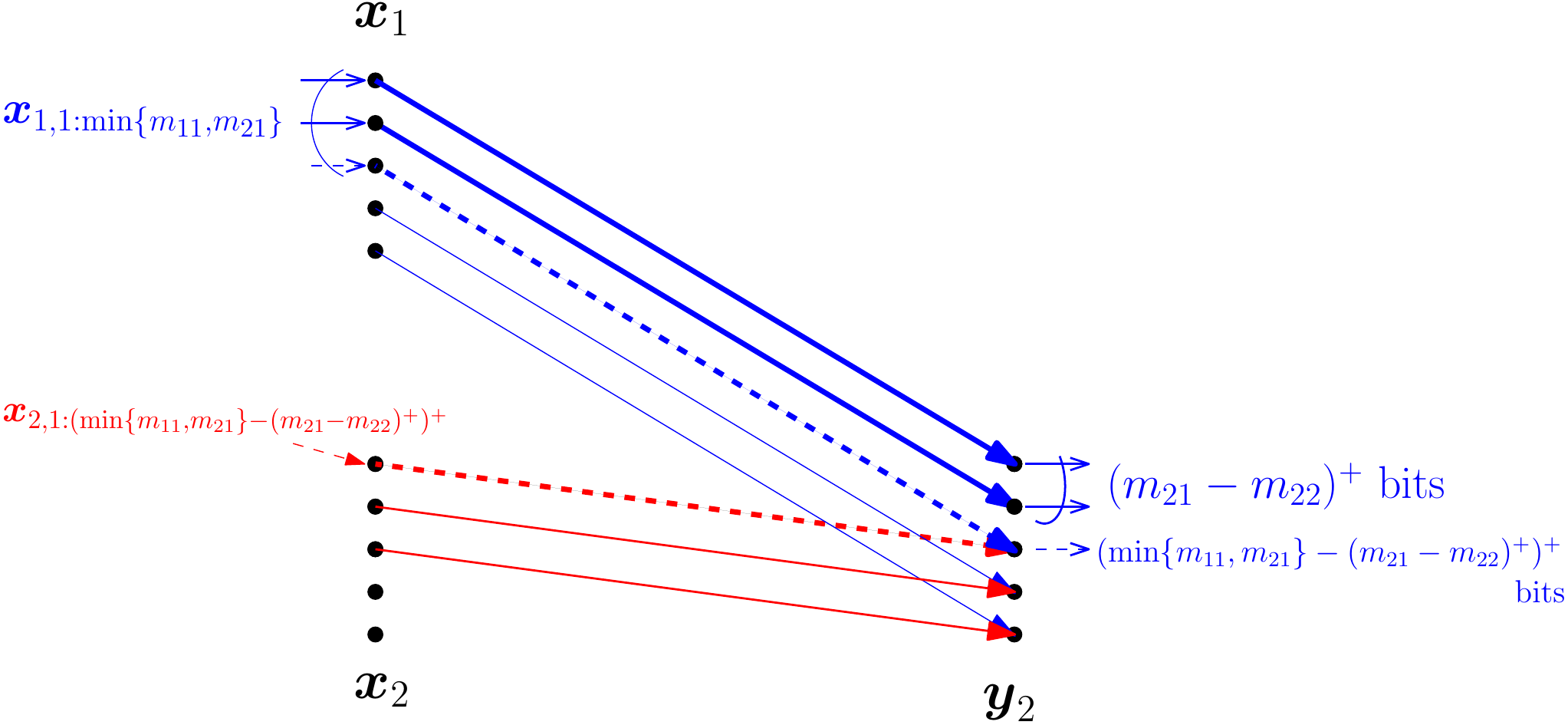}
\caption{Deterministic channel model for receiver~2.   $\xv_{1,1: \min\{m_{11}, m_{21}\}}$ can be reconstructed from $\{\yv_2,   \xv_{2,1: (\min\{m_{11}, m_{21}\}- ( m_{21} - m_{22} )^+)^+} \}$. For example, when $m_{11}=m_{22} = 3$ and $m_{12}=m_{21} = 5$, then $\xv_{1,1: 3}$ can be reconstructed from $\{\yv_2,   \xv_{2,1: 1} \}$. } 
\label{fig:ICdety2a32}
\end{figure}

\subsection{Proof of bounds \eqref{eq:detup4} and \eqref{eq:detup5} \label{sec:detup45} }

In what follows we will prove bound \eqref{eq:detup4}. Bound \eqref{eq:detup5} can be proved in a similar way. 
Beginning with Fano's inequality, we can bound the rate of receiver~1 as:
\begin{align}
 & \bln R_1 -  \bln \epsilon_{1,n}    \non\\
&\leq  \Imu(w_1; \yv^{\bln}_1)   \non \\
  &\leq  \Imu(w_1; \yv^{\bln}_1)    - \Imu(w_1; \yv^{\bln}_2) + \bln \epsilon    \label{eq:2R1sec325}  \\
  &\leq  \Imu(w_1; \yv^{\bln}_1, \xv^{\bln}_{1, 1: \max\{m_{11}, m_{21}\}}, \xv^{\bln}_{2, 1: \max\{m_{12}, m_{22}\}},    \yv^{\bln}_2 )   - \Imu(w_1; \yv^{\bln}_2) + \bln \epsilon    \label{eq:2R1add223}  \\
  &=   \Imu(w_1; \yv^{\bln}_1, \xv^{\bln}_{1, 1: \max\{m_{11}, m_{21}\}}, \xv^{\bln}_{2, 1: \max\{m_{12}, m_{22}\}} |   \yv^{\bln}_2 )    + \bln \epsilon   \non  \\
  &=    \Imu(w_1; \xv^{\bln}_{1, 1: \max\{m_{11}, m_{21}\}}, \xv^{\bln}_{2, 1: \max\{m_{12}, m_{22}\}} |   \yv^{\bln}_2 ) +\underbrace{ \Imu(w_1; \yv^{\bln}_1 | \xv^{\bln}_{1, 1: \max\{m_{11}, m_{21}\}}, \xv^{\bln}_{2, 1: \max\{m_{12}, m_{22}\}}, \yv^{\bln}_2 ) }_{=0}+ \bln \epsilon   \non  \\
  &=   \Imu(w_1; \xv^{\bln}_{1, 1: \max\{m_{11}, m_{21}\}}, \xv^{\bln}_{2, 1: \max\{m_{12}, m_{22}\}} |   \yv^{\bln}_2 ) + \bln \epsilon    \label{eq:2R1Markove218}   \\
    &\leq   \Hen(\xv^{\bln}_{1, 1: \max\{m_{11}, m_{21}\}}, \xv^{\bln}_{2, 1: \max\{m_{12}, m_{22}\}} |   \yv^{\bln}_2 )  + \bln \epsilon    \label{eq:2R1Markove3322}   \\
  &=    \Hen(\xv^{\bln}_{1, 1: \max\{m_{11}, m_{21}\}}, \xv^{\bln}_{2, 1: \max\{m_{12}, m_{22}\}},  \yv^{\bln}_2 )  -  \Hen( \yv^{\bln}_2 )   + \bln \epsilon    \non   \\
  &=    \Hen(\xv^{\bln}_{1, 1: \max\{m_{11}, m_{21}\}}, \xv^{\bln}_{2, 1: \max\{m_{12}, m_{22}\}})  -  \Hen( \yv^{\bln}_2 )   + \bln \epsilon    \label{eq:2R1Markove763}   
 \end{align}
where \eqref{eq:2R1sec325}  results  from a secrecy constraint (cf.~\eqref{eq:defsecrecy1});
\eqref{eq:2R1add223}  stems from the fact that adding information does not decrease the mutual information;  
\eqref{eq:2R1Markove3322} follows from the nonnegativity of entropy;
\eqref{eq:2R1Markove218} and \eqref{eq:2R1Markove763} follow  from the Markov chain  of $\{w_1, w_2\} \to \{ \xv^{\bln}_{1, 1: \max\{m_{11}, m_{21}\}} , \xv^{\bln}_{2, 1: \max\{m_{12}, m_{22}\}}\}   \to  \{\yv^{\bln}_1, \yv^{\bln}_2\}$.  On the other hand, we have 
\begin{align}
   \bln R_1    
&\leq  \Imu(w_1; \yv^{\bln}_1)  + \bln \epsilon_{1,n}  \label{eq:Fano2335} \\
  &\leq  \Imu(\xv^{\bln}_{1}; \yv^{\bln}_1 )  + \bln \epsilon_{1,n}   \label{eq:2R1Markove82435}  \\
  &=    \Hen(\yv^{\bln}_1 )  -  \Hen( \yv^{\bln}_1 | \xv^{\bln}_{1})   +  \bln \epsilon_{1,n}    \non   \\
  &=    \Hen(\yv^{\bln}_1 )  -  \Hen( \xv^{\bln}_{2, 1: m_{12}} | \xv^{\bln}_{1})   +  \bln \epsilon_{1,n}     \label{eq:2R1y122}  \\
  &=    \Hen(\yv^{\bln}_1 )  -  \Hen( \xv^{\bln}_{2, 1:m_{12}} )   + \bln \epsilon_{1,n}     \label{eq:2R1ind254}  
 \end{align}
where \eqref{eq:Fano2335} results from  Fano's inequality;
\eqref{eq:2R1Markove82435}  follows from  the Markov chain  of $w_1 \to  \xv^{\bln}_{1}  \to  \yv^{\bln}_1$;  
\eqref{eq:2R1y122} results from the definition of $\yv_1(t) =   S^{q- m_{11}} \xv_1(t)  \oplus   S^{q- m_{12}} \xv_2(t)$ (cf.~\eqref{eq:detTHIC1});
\eqref{eq:2R1ind254}  follows from the independence between $\xv^{\bln}_{1}$ and $\xv^{\bln}_{2}$.
In a similar way, we have 
\begin{align}
   \bln R_2    
&\leq    \Hen(\yv^{\bln}_2 )  -  \Hen( \xv^{\bln}_{1, 1:m_{21}} )   + \bln \epsilon_{2,n} .    \label{eq:2R2ind888}  
 \end{align}
Finally, by combing \eqref{eq:2R1Markove763}, \eqref{eq:2R1ind254} and \eqref{eq:2R2ind888}, it gives
\begin{align}
 &  2\bln R_1 + \bln R_2 -  2\bln \epsilon_{1,n} - \bln \epsilon_{2,n} - \bln \epsilon    \non  \\
  &\leq     \underbrace{\Hen(\yv^{\bln}_1 )}_{\leq  \bln \cdot \max\{ m_{11}, m_{12} \}}  +  \underbrace{  \Hen(\xv^{\bln}_{1, 1: \max\{m_{11}, m_{21}\}})    -  \Hen( \xv^{\bln}_{1, 1: m_{21}} )}_{ \leq    \bln \cdot (  m_{11}  - m_{21})^+   }       +  \underbrace{  \Hen(\xv^{\bln}_{2, 1: \max\{m_{12}, m_{22}\}})    -  \Hen( \xv^{\bln}_{2, 1: m_{12}} )   }_{ \leq    \bln \cdot (  m_{22}  - m_{12})^+ }     \non \\
  &\leq       \bln \cdot \max\{ m_{11}, m_{12} \}  +  \bln \cdot (  m_{11}  - m_{21})^+ +    \bln \cdot (  m_{22}  - m_{12})^+   \label{eq:2R1sum333}
 \end{align}
where \eqref{eq:2R1sum333} follows from the facts that $\Hen(\yv^{\bln}_1 ) \leq  \bln \cdot \max\{ m_{11}, m_{12} \}$,  that  $\Hen(\xv^{\bln}_{1, 1: \max\{m_{11}, m_{21}\}})    -  \Hen( \xv^{\bln}_{1, 1: m_{21}} )  \leq \bln \cdot (\max\{m_{11}, m_{21}\} - m_{21}) = \bln \cdot (  m_{11}  - m_{21})^+ $, and that $\Hen(\xv^{\bln}_{2, 1: \max\{m_{12}, m_{22}\}})    -  \Hen(  \xv^{\bln}_{2, 1: m_{12}})  \leq \bln \cdot (  m_{22}  - m_{12})^+ $.
By setting  $n\to \infty, \  \epsilon_{1,n}, \epsilon_{2,n} \to 0 $ and  $\epsilon \to 0$, it gives bound \eqref{eq:detup4}.
By interchanging the roles of user~1 and user~2, bound \eqref{eq:detup5} can be proved in a similar way.

\section{A sketch of the cooperative jamming schemes for Theorem~\ref{thm:capacitydet}  \label{sec:achisome}}

For the two-user \emph{symmetric}  deterministic interference channel, the work in \cite{GTJ:15} has proposed some  schemes \emph{with cooperative jamming} that are optimal when $ \frac{2}{3} < \alpha <1$ and  $ 1 < \alpha <2$, in terms of secure sum secure capacity.
This section just provides  a sketch of these cooperative jamming schemes, as more details could be found in  \cite{GTJ:15}.
In these cooperative jamming schemes, each transmitter generally sends one or more of the following signals: 1) private data signal, which can only be seen by its desired receiver; 2) common data signal, which can be received by both receivers; 3) jamming signal, which can be used to jam the unintended common data signal to guarantee secrecy.

For the case of $ \frac{2}{3} < \alpha \leq \frac{3}{4}$,  at each time each transmitter  sends a total of $2\mc- \md$ bits of data, consisting of $\md- \mc$ bits of private data and $3\mc- 2\md$ bits of common data. 
In this case, the transmission of both  private data and common data is secure from the corresponding  eavesdropper by using cooperative jamming (see Fig.~\ref{fig:schemeICdet34} on the setting with $\md=4$ and $\mc=3$).

\begin{figure}[t!]
\centering
\includegraphics[width=7cm]{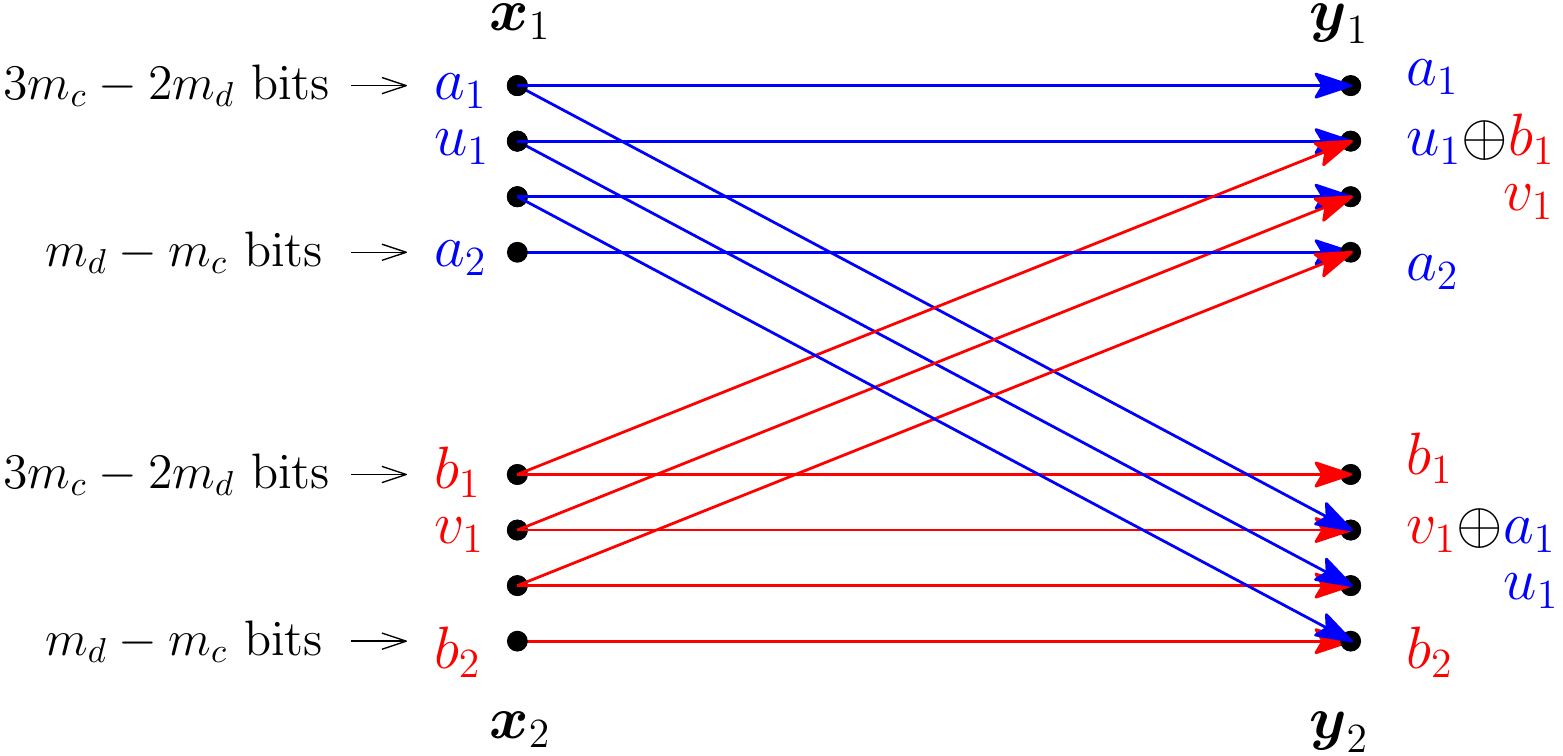}
\caption{Optimal scheme for the symmetric deterministic channel with $\md = 4$ and $\mc =3$.  At each time, each transmitter  sends  $\md- \mc=1$ bit of private data and $3\mc- 2\md = 1 $ bit of common data.   The transmission of  private data  ($a_2$ for transmitter~1 and $b_2$ for transmitter~2) is secure from the unintended receiver.  The transmission of common data ($a_1$  for transmitter~1 and $b_1$ for transmitter~2) is also secure by using cooperative jamming signal ($u_1$ for transmitter~1 and $v_1$ for transmitter~2).} 
\label{fig:schemeICdet34}
\end{figure}

\begin{figure}[t!]
\centering
\includegraphics[width=7cm]{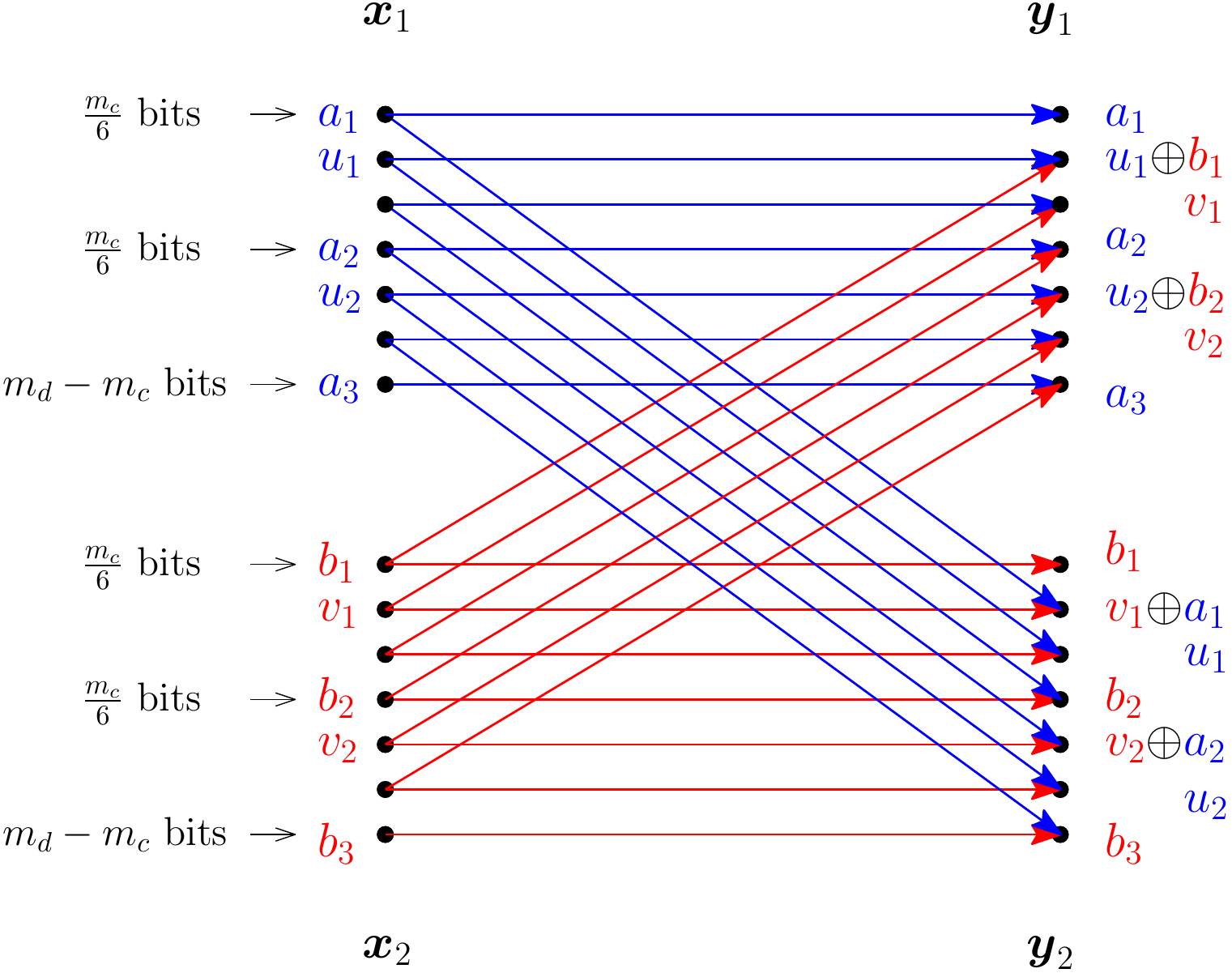}
\caption{Optimal scheme for the symmetric deterministic channel with $\md = 7$ and $\mc =6$.  At each time,  each transmitter  sends a total of $\md- \frac{2}{3}\mc= 3$ bits of data, consisting of $\md- \mc=1$ bit of private data and $\frac{\mc}{3}= 2$ bits of common data.   The transmission of private data  ($a_3$ for transmitter~1 and $b_3$ for transmitter~2) is secure from the unintended receiver.  The transmission of common data ($a_1, a_2$  for transmitter~1 and $b_1, b_2$ for transmitter~2) is also secure by using cooperative jamming signal ($u_1, u_2$ for transmitter~1 and $v_1, v_2$ for transmitter~2).} 
\label{fig:schemeICdet67}
\end{figure}

\begin{figure}[t!]
\centering
\includegraphics[width=7cm]{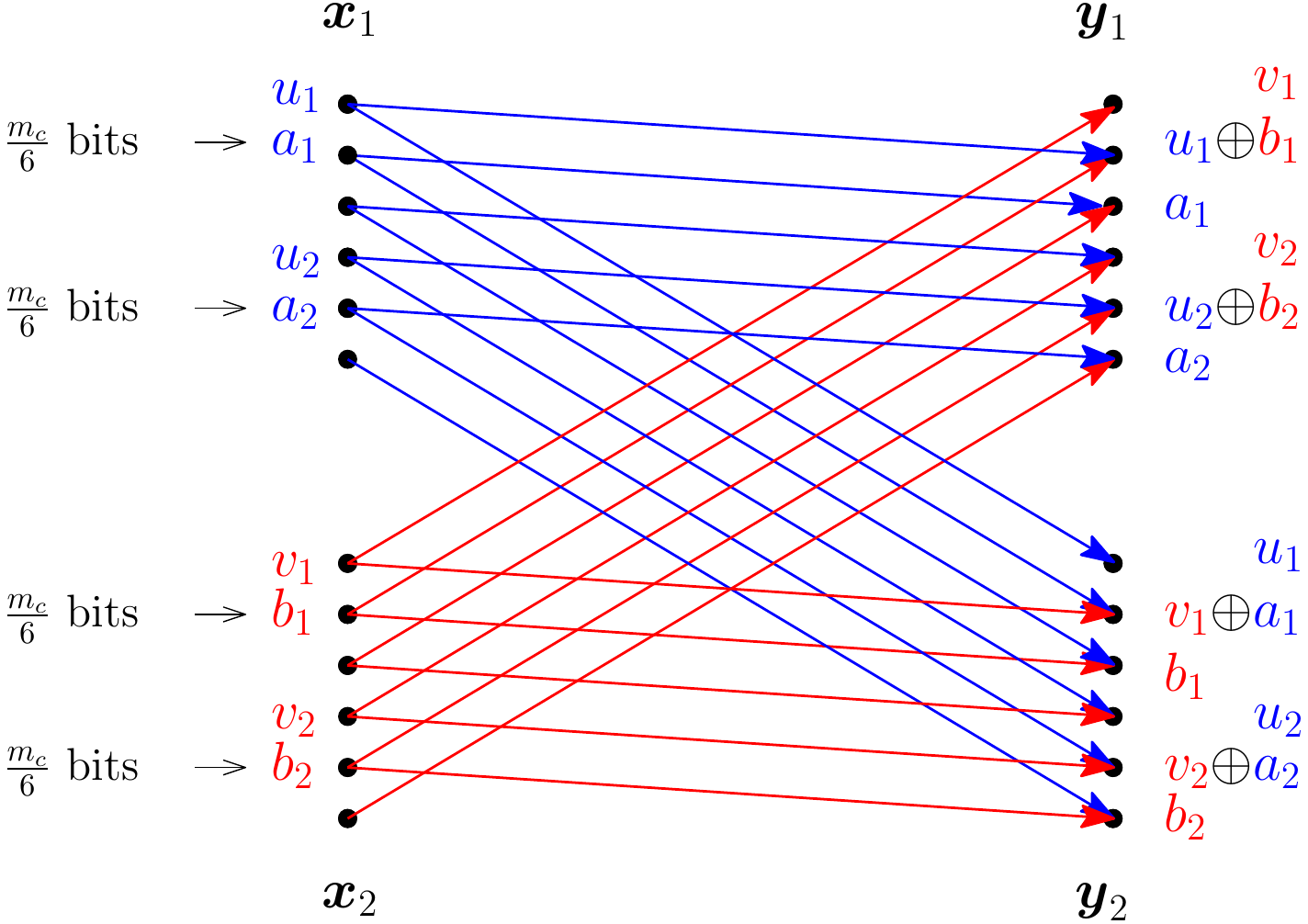}
\caption{Optimal scheme for the symmetric deterministic channel with $\md = 5$ and $\mc =6$.  At each time, each transmitter  sends a total of $ \frac{\mc}{3}$ bits of common data.  The transmission of common data ($a_1, a_2$  for transmitter~1 and $b_1, b_2$ for transmitter~2) is secure by utilizing cooperative jamming.} 
\label{fig:schemeICdet65}
\end{figure}

\begin{figure}[t!]
\centering
\includegraphics[width=8cm]{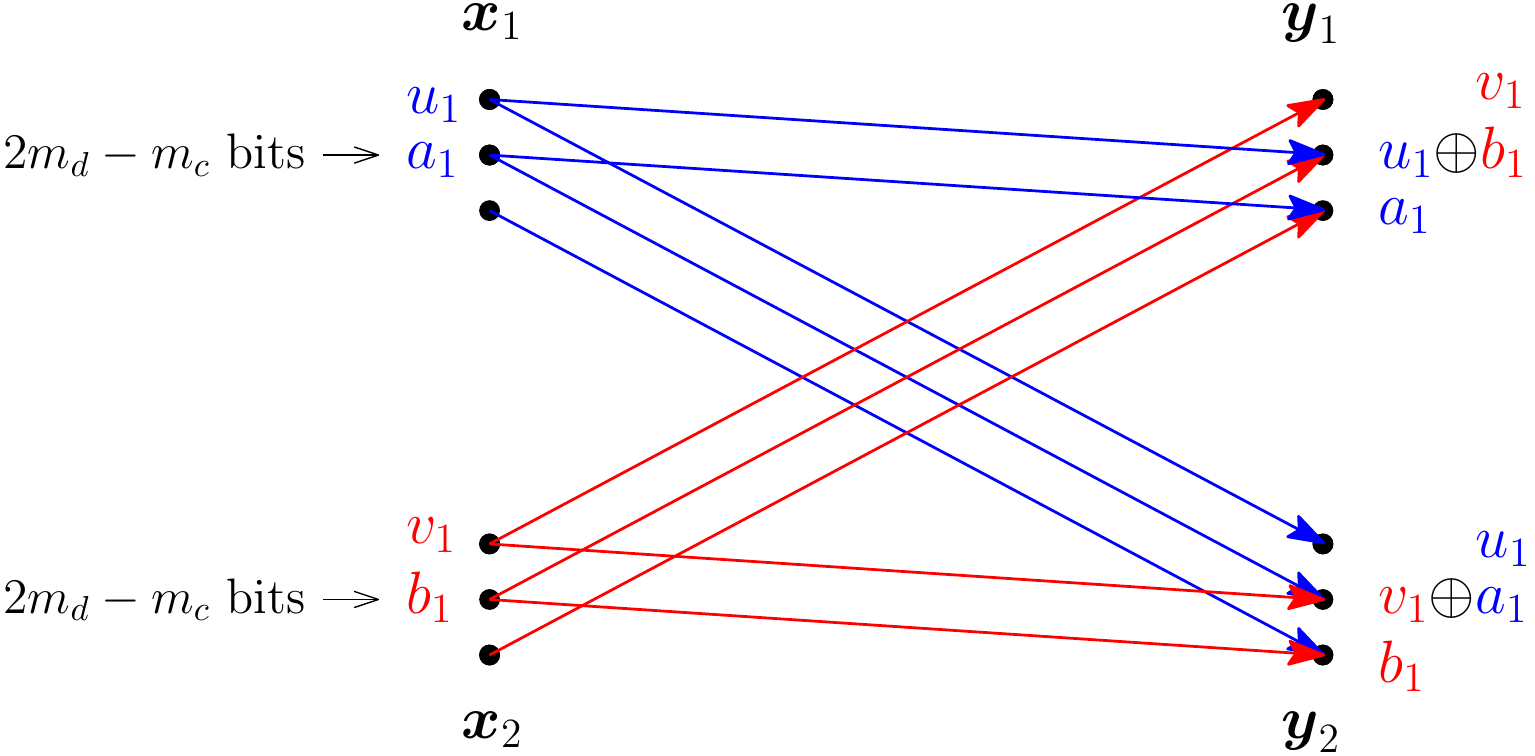}
\caption{Optimal scheme for the symmetric deterministic channel with $\md = 2$ and $\mc =3$.  At each time, each transmitter  sends a total of  $2\md- \mc =1$ bit of common data that is secure from the corresponding  eavesdropper by utilizing cooperative jamming.} 
\label{fig:schemeICdet32}
\end{figure}

For the case of $ \frac{3}{4} < \alpha \leq  1 $,  at each time each transmitter  sends a total of $\md- \frac{2}{3}\mc$ bits of data, consisting of $\md- \mc$ bits of private data and $\frac{\mc}{3}$ bits of common data. For simplicity of exposition, we focus on the setting with  $\md=7$ and $\mc=6$, and depict the optimal scheme in Fig.~\ref{fig:schemeICdet67}. As can be seen in Fig.~\ref{fig:schemeICdet67}, the transmission of both  private data and common data is secure from the corresponding  eavesdropper.

For the case of $ 1 < \alpha <  \frac{3}{2} $,  each transmitter  sends a total of $ \frac{\mc}{3}$ bits of common data per channel time.  The  optimal scheme  is depicted in Fig.~\ref{fig:schemeICdet65} on the setting with  $\md=5$ and $\mc=6$. 

For the case of $ \frac{3}{2} \leq  \alpha <   2$,  each transmitter  sends a total of  $2\md- \mc$ bits of common data per channel time.   The  optimal scheme  is depicted in  Fig.~\ref{fig:schemeICdet32} on the setting with  $\md=2$ and $\mc=3$.

\section*{Acknowledgement}
We wish to thank Ayfer \"Ozg\"ur and Andrea Goldsmith for helpful comments during the early stage of this work. We also wish to thank Chunhua Geng, Syed Ali Jafar and Zhiying Wang for helpful comments on the early version of this work.



\end{document}